\newtheorem{theorem}{Theorem}
\newtheorem{conjecture}[theorem]{Conjecture}
\newtheorem{definition}{Definition}
\newtheorem{lemma}[theorem]{Lemma}
\newtheorem{remark}{Remark}
\DeclarePairedDelimiter\floor{\lfloor}{\rfloor}
\newcommand{\OPT}{\mathsf{OPT}}
\newcommand{\PoND}{\mathsf{PoND}}
\title{Multi-Agent Non-Discriminatory Contracts}
\author[1]{Ke Ding\thanks{coco-ke.ding@connect.polyu.hk}}
\author[1]{Bo Li\thanks{comp-bo.li@polyu.edu.hk}}
\author[2]{Ankang Sun\thanks{sunankang1995@gmail.com}}
\affil[1]{Department of Computing, The Hong Kong Polytechnic University}
\affil[2]{School of Computer Science and Technology, Shandong University}
\begin{document}

\maketitle

\begin{abstract}
    % We study the principal-agent model in multi-agent settings, in which our objective is to quantify the trade-off between maximizing the principal's revenue and equalizing payments among agents. In practice, this scenario is relevant to organizations like worker cooperatives. The price of $\beta$-non-discrimination ($\beta$-$\PoND$) is introduced to capture the loss in the principal's revenue when the payment constraint is applied or not. 

    % Our main results consist of two parts for submodular success probability function. First, for exact non-discrimination when $\beta=1$, the $1$-$\PoND$ is logarithmic. Second, in a relaxed setting where the payment gap is allowed to scale with the number of agents when $\beta=n^\epsilon$, constant-factor $\beta$-$\PoND$ is achievable. Both asymptotic bounds are nearly tight.
    % % , and we additionally establish an exactly tight bound for the specific case of two agents.
    % {\color{blue} Bo:     
    % I am not sure if highlighting the case of two agents is a plus in the abstract.}
    % {\color{red} DK: will ~120 words too few? if not idk what else to expand more.}

    % {\color{blue} Bo: Is the following summary of our results precise?

    % \medskip

    We study multi-agent contracts, in which a principal delegates a task to multiple agents and incentivizes them to exert effort.
    Prior research has mostly focused on maximizing the principal’s utility, often resulting in highly disparate payments among agents. 
    Such disparities among agents may be undesirable in practice, for example, in standardized public contracting or worker cooperatives where fairness concerns are essential.
    Motivated by these considerations, our objective is to quantify the tradeoff between maximizing the principal's utility and equalizing payments among agents, which we call the {\em price of non-discrimination}. 
    Our first result is an almost tight bound on the price of non-discrimination, which scales logarithmically with the number of agents. 
    This bound can be improved to a constant by allowing some relaxation of the non-discrimination requirement. 
    We then provide a comprehensive characterization of the tradeoff between the level of non-discrimination and the loss in the optimal utility. 
\end{abstract}

\section{Introduction} 

% {\color{blue}{dk: The text in the bracket is what I want to express in the paragraph}}

As one of the pillars in microeconomic theory, contract theory utilizes the principal-agent model to achieve optimal outcomes amidst asymmetric information \cite{holmstrom1979moral,holmstrom1982moral}. It focuses on designing the market for services, analogous to how mechanism design and auction theory structure the market for goods. The past few years have witnessed growing interest in contract theory in both the economics and computation, with the applications extending to real-world scenarios like crowdsourcing platforms \cite{ho2014adaptive}, social media \cite{2021socialmedia}, and government-run programs \cite{dai2021contract}.

% \textbf{(Brief the model: combinatorial single-agent, to be aggregated in 1.2) }
% In recent years, this model has been extended to a combinatorial setting \cite{dutting2025multi,dutting2025combinatorial,deo2024supermodular}. The agent is allowed to take any subset of $n$ actions, with every strategy $S$ associated with a success probability $f(S)$. As the reward function $f$ belongs to classes ranging from gross substitutes and submodular to supermodular, finding the computationally efficient contract relies on the dependencies between the agent's actions.

In the vanilla hidden-action principal-agent model \cite{holmstrom1979moral,grossman1992analysis}, the principal delegates a single task to an agent who may take one of costly actions and its outcome determines the reward. A contract, structured as a monetary transfer, incentivizes the agent to take the action maximizing his utility, which equals the transfer minus the cost. The principal cannot observe the action taken by the agent but only the outcome. Her goal is also to maximize her utility, which equals the reward obtained minus the payment transferred.  

The basic model was later extended to the setting where the principal incentivizes a team of agents for a single project \cite{holmstrom1982moral,babaioff2006combinatorial}. In this framework, each of the agents faces a binary choice: to exert effort at a specific cost, or shirk without cost. This multi-agent model involves strategic interactions among agents, ensuring that the resulting game induces a pure Nash equilibrium. 
For any subset of agents, the best way to incentivize them is to pay each agent his indifference payment between exerting effort and shirking, while paying zero to others. 
The scope of recent works focuses on computing the optimal contract \cite{dutting2023multi,ezra2023approximability,dutting2025multi}.

In the multi-agent model, payments in the optimal contract may vary significantly across the agents. 
However, in practice, companies and organizations frequently implement standardized payment schemes. For example, chain stores such as McDonald's or Starbucks set standard hour rates to treat every worker equally. A strengthened form of this egalitarianism is found in worker cooperatives, a concept dating back to the English labor movement. In organizations like the UK wholesaler Suma Wholefoods, equal pay is implemented to reflect the view that every worker is a co-owner and all contributions are essential to the business's success \cite{2025ukstore}.

Although equal pay brings high cohesion and equality, it is not universally viable. Not only because high-skill workers can command higher wages elsewhere, but also those high performers consider it unfair. A practical trade-off is to adopt the wage ratio, which refers to the gap between the top salaries and the bottom salaries in economy. For example, as a world leader in the co-operative movement, the Mondragon Corporation in Spain uses agreed-upon wage ratios \cite{2022spaincoop}. The ratios range from 1:3 to 1:9 in its different cooperatives. The statutory minimum wage could also be viewed through a similar lens: by having the highest wage fixed, the minimum wage was guaranteed. 
% We refer to such contracts as (approximately) non-discriminatory contracts. 

In this paper, we study contracts that give agents equitable payments, as considered in \cite{feng2024price,castiglioni2025fairteamcontracts}. Motivated by wage ratios used in practice, we extend this by allowing agents' payments to differ by a multiplicative factor.
Our goal is to offer a comprehensive understanding of how the degree of non-discrimination impacts the principal's optimal utility.

\subsection{Contribution}
We study multi-agent contracts, where a principal delegates a task to multiple agents.
Each agent incurs a cost upon exerting effort, and the overall success probability across agents is monotone submodular. 
The principal incentivizes agents to contribute effort by offering contractual payments, aiming to maximizing her utility. 
Our paper specifically centers on non-discriminatory contracts, which ensure equitable payments to all agents, and we explore their impact on the principal's optimal utility.
The results are summarized in Table \ref{res}.

\renewcommand{\arraystretch}{1.8}
\newcolumntype{C}[1]{>{\centering\arraybackslash}m{#1}}
\begin{table*}[t]
\centering
\begin{tabular}{|C{1.5cm}|C{1.8cm}|C{4cm}|C{4cm}|}
\hline
\multicolumn{2}{|c|}{} & Upper Bound & Lower Bound \\
\hline
\multicolumn{2}{|c|}{$\PoND$} 
& $O(\log n)$ (Lemma \ref{nd_ub}) 
& $\Omega(\frac{\log n}{\log \log n})$ (Lemma \ref{nd_lb}) \\
\hline
& \makecell[c]{
\rule{0pt}{2.8ex}
$0<\delta<1$
} 
& \makecell[c]{
\rule{0pt}{2.8ex}
{$\left\lceil \frac{1}{\delta}\right\rceil+1$} (Lemma \ref{bnd_ub})
} 
& \makecell[c]{
\rule{0pt}{2.8ex}
$\max \{2, \frac{1}{\delta+o(1)}\}$\\
(Lemmas \ref{bnd_epslb} and \ref{bnd_2lb})
} \\
\cline{2-4}
$\PoND(\beta)$  & $\delta=1$ 
& $2$ (Lemma \ref{bnd_ub}) 
& $1.591$ (Lemma \ref{bnd_1.6lb}) \\
\cline{2-4}
($\beta=n^\delta$) & $\delta>1$  
& \multicolumn{2}{c|}{$1+o(1)$ (Remark \ref{r1})}  \\
\cline{2-4}
& $n=2$  
& \multicolumn{2}{c|}{$1 + \frac{1}{\sqrt{\beta+1}}$ (Theorem \ref{bnd:2a})}   \\
\hline
\end{tabular}

\caption{Summary of results for the price of ($\beta$-)non-discrimination, where $n$ is the number of agents and $\beta$ is the payment gap between the linear contract.}
\label{res}

\end{table*}

Similar to \cite{feng2024price}, we consider the concept of price of non-discrimination ($\PoND$), which is defined as the supremum ratio between the principal's optimal utility of unconstrained contracts and that of all non-discriminatory contracts.
For any submodular reward function with $n$ agents, we design a non-discriminatory contract that achieves $\Omega(\frac{1}{\log n})$ fraction of the optimal utility without constraints\footnote{Throughout the paper, $\log$ denotes the base-2 logarithm.}, 
implying an $O(\log n)$ upper bound on $\PoND$. 
We then prove that the bound cannot be improved by more than an $O(\log\log n)$ factor, even if the success probability function is additive. 
% {\color{blue} or if the agents' success probabilities are independent.}
Thus, the $\PoND$ is bounded between $\Omega(\frac{\log n}{\log \log n})$ and $O(\log n)$. 

\medskip

\noindent \textbf{Main Result 1:}  $\PoND=\Tilde{\Theta}(\log n)$.

\medskip

As $n$ increases, the bound $\Theta({\log n})$ becomes large, which means that optimality and non-discrimination are hardly compatible.
% Is this correct?
We observe that if we are willing to relax the non-discrimination requirement, which is often acceptable in practice, the utility can be substantially increased.
To more thoroughly and accurately examine this effect, for $\beta \geq 1$, we extend the notion of non-discrimination to $\beta$-approximate non-discriminatory contracts (or $\beta$-non-discriminatory contracts), which permit the difference between the highest and lowest payments to be within a factor of $\beta$. 
When $\beta = 1$, all agents receive equal payments; as $\beta$ approaches infinity, agents are paid according to the optimal unconstrained contract. 
Accordingly, the price of $\beta$-non-discrimination ($\PoND(\beta)$) can be defined in the same manner.

We then provide a comprehensive parametric study on $\PoND(\beta)$, by setting $\beta=n^{\delta}$, where $\delta\ge 0$. When $\delta=0$, the problem degenerates to the non-relaxed version and in the following, we assume $\delta> 0$.
Our goal is to characterize the relationship between $\PoND(\beta)$ and $\delta$.

\medskip

\noindent \textbf{Main Result 2:} 
Letting $\beta=n^{\delta}$,
\begin{itemize}
    \item for $0<\delta<1$, $\max\{2, \frac{1}{\delta+o(1)}\}\le \PoND(\beta) \le \lceil \frac{1}{\delta} \rceil+1$;
    \item for $\delta=1$, $1.591\le \PoND(\beta) \le 2$;
    \item for $\delta>1$, $1\le \PoND(\beta) \le 1+o(1)$.
\end{itemize}

\medskip

The above result offers an algorithmic framework that assists decision makers in selecting an appropriate level of non-discrimination, taking into account the potential loss in the principal's utility associated with non-discriminatory contracts.
The bounds are asymptotically tight for both small and large values of $\delta$. For example, if $\delta$ is set to $\frac{1}{2}$, then the ratio between the highest and lowest payments is within $\sqrt{n}$, and our contract ensures the principal's utility to be at least $\frac{1}{3}$ of the optimal utility. If $\delta=\frac{1}{\log n}$, the ratio of highest and lowest payments is within $2$, and our contract ensures the principal's utility to be at least $\frac{1}{\log n + 1}$ of the optimal utility. 
Besides, the bound is not tight for values of $\delta$ close to 1, which remains an open question for future research.

% For any instance with submodular success function, the price of $n^\delta$-non-discrimination is upper-bounded by $\lceil \frac{1}{\delta} \rceil+1$. On the other hand, the lower bound is $\max\{2, \frac{1}{\delta+O(\frac{\log\log n}{\log n})}\}$ when $\delta \in (0,1)$ and $1.591$ when $\delta=1$.

% We interpret this result further. As $\delta$ close to $0$, the bounds are asymptotically tight; For example when we allow the payment gap be $n^{1/t}$, i.e. $\delta=\frac{1}{t}$, the price of $\sqrt{n}$-non-discrimination is bounded by about $t$ and $t+1$. Specifically at $\delta=1$, there is a gap between our lower bound of $1.591$ and upper bound of $2$. 

\medskip

In the previous two sets of results, we focused on the asymptotic bounds as $n$ becomes large.  
To complement this analysis, we present a case study for two agents, where we derive the exact tight bound on $\PoND(\beta)$ using a more algebraic approach.

\medskip

\noindent \textbf{Main Result 3:} When $n=2$, $\PoND(\beta)=1+\frac{1}{\sqrt{\beta+1}}$.  

\medskip

It is worth noting that non-discriminatory contracts in the same multi-agent model are studied in a simultaneous and independent work \cite{castiglioni2025fairteamcontracts}. However, their main focus is on quantifying the difference between non-discriminatory contracts and other forms of \textit{fair} contracts, rather than examining how non-discrimination impacts the optimal utility of the principal. 

% In contrast to the last two theorems, we prove this result in a more algebraic approach. As it turns out, the bound becomes tight when either incentivize strategy induces the same utility for the principal. Although this property is specific to two agents, we believe it offers intuition for general instances. 

\subsection{Related Work}

\paragraph{Algorithmic contract design.} Following two aforementioned papers in microeconomics by \cite{holmstrom1979moral} and \cite{grossman1992analysis}, contract theory has been well explored from an algorithmic perspective. A core problem within is combinatorial contracts, which can be categorized into single-agent multi-action contracts and multi-contracts. \cite{dutting2025combinatorial} studied the former one where an agent chooses any subset of possible actions, bringing dependencies among agent's actions. They showed an optimal contract can be computed efficiently if the success probability function is gross substitutes, while computing the optimal contract is NP-hard for submodular functions. 
As a consequence, a line of works by \cite{ezra2023approximability,dutting2024combinatorial,deo2024supermodular,feldman2025ultra} also extended the model to a combinatorial setting, with the reward function ranging from gross substitutes and submodular to supermodular. They addressed problems in both tractability and hardness.

\paragraph{Multi-agent contracts.} This paper is more closely related to the multi-agent setting pioneered by \cite{babaioff2006combinatorial}. In their setting, a principal incentivizes a team of agents to jointly work on a single project. Each of the $n$ agents chooses to exert effort or not, and optimizing the principal's utility was defined as the combinatorial agency problem. The landscape was significantly expanded by \cite{dutting2023multi}. In their work, more complicated reward functions were considered, in which the payment to each agent in an incentivized set depends on their marginal contribution to the project. They gave constant-factor approximation algorithms for submodular and XOS reward functions, while there is an $\Omega(\sqrt{n})$ impossibility for subadditive functions with $n$ agents. The most recent work by \cite{ezra2023approximability} and \cite{dutting2025multi}showed that no PTAS exists for Submodular functions, and no approximation better than $\Omega(n^{\frac{1}{6}})$ exists. 

Another multi-agent model was studied by \cite{castiglioni2023multi} and \cite{goel2025multiagentcontractdesignbudget}. In their setting, each agent completes their own task with their payments conditioned on individual performance. Thus, it is natural that payments to each agent are linked to their own outcomes. Our work, however, is based on the first multi-agent model where a team of agents work on a single project. It is mainly because in practical, organizations like worker cooperatives make everyone work towards a common goal instead of finishing individual tasks. As there is no standard way to judging someone's outcome, equalizing the payments matters.

\paragraph{Fairness in contracts.} 
Notably, the only prior study of non-discriminatory contracts was conducted by \cite{feng2024price}. The key difference is that they considered the second multi-agent model as previously mentioned, where agents complete individual tasks instead of a single task. While they have derived logarithmic price of non-discrimination across all their settings, our model introduces the concept of price of $\beta$-non-discrimination to achieve constant-factor approximations. 
Fairness concerns were also investigated by \cite{castiglioni2025faircontracts}, who considered the trade-off between profit-maximizing and fairness when allocating a bunch of tasks to some agents. 
In a simultaneous work with this paper, \cite{castiglioni2025fairteamcontracts} explore \textit{fair} contracts using a swap-based definition. Although non-discriminatory contracts are involved, their primary objective is to bound the gap between non-discriminatory contract and optimal \textit{fair} contract.

% \section*{Notice when writing}
% \begin{itemize}
%     \item contract vs. contracts
%     \item the name of different kinds of contracts: {\color{red} use "non-discriminatory(ND)" and "unconstrained"}
%     \item {\color{red} where to use "non-discriminatory" and where to use ND for abbr?}
%     \item objective function: {\color{red} utility.}
%     \item interpret f as {\color{red}probability, after normalizing it becomes reward}, and interpret $\alpha$ as {\color{red}payment}. 
% \end{itemize}

\section{Preliminaries}
\paragraph{Multi-agent hidden-action setting.} For any $k\in \mathbb{N}^+$, let $[k]=\{1,\ldots,k\}$. We study the model in which a principal interacts with a set of agents $N=[n]=\{1,\ldots,n\}$ to execute a project. Each agent chooses whether to exert effort or shirk. For each agent $i$, exerting effort incurs cost $c_i \in \mathbb{R}_{> 0}$, while shirking incurs zero cost.
The outcome of the project is either ``success'' or ``failure'', and depends on the set of agents who choose to exert effort. 
The principal observes the outcome of the project but not the actions of agents. 
Let $f: 2^{N}\rightarrow [0,1]$ be the project's success probability function. When a set of agents $S$ exert effort, the project succeeds with probability $f(S)$.
If the project succeeds, the principal derives reward normalized to $r=1$. 
Thus, $f$ is also referred to as the (expected) reward function.
Otherwise if the project fails, the reward is zero.
For any $i\in N$ and $S\subseteq N$, define $f(i \, | \, S):= f(S\cup \{i\})-f(S)$ as the marginal contribution of including $i$ to $S$.
In this paper, $f$ is considered to be monotone and submodular, where monotonicity ensures that for any $S\subseteq T$, $f(S)\leq f(T)$, and submodularity ensures that for any $S\subseteq T$ and any $i\notin T$, $f(i \, |\, T)\leq f(i \, |\, S)$.
For any $S\subseteq N$, $|S|$ denotes its cardinality.

\paragraph{Optimal contract design.} Since exerting effort is costly, agents by themselves have no interest in exerting effort, of which the benefit goes to the principal.
To incentivize agents, the principal designs a contract that assigns payments to agents based on the outcome of the project. 
In this paper, we focus on the \emph{linear contract}, an $n$-dimensional vector $\alpha=(\alpha_1,\ldots,\alpha_n)$ with every $\alpha_i\in[0,1]$. A contract $\alpha$ assigns payment $\alpha_i$ to agent $i$ when the project succeeds, regardless of whether agent $i$ exerts effort or not.

When analyzing contracts, we consider the pure Nash equilibria of the induced game among the agents and adopt the tie-breaking rule where agents exert effort when indifferent.
For a contract $\alpha$, suppose that it incentivizes agents $S$ to exert effort, i.e.,
\[
\begin{aligned}
    &\alpha_if(S)-c_i\geq \alpha_if(S\setminus\{i\}) \quad \text{for all } i\in S, \\
    &\alpha_if(S) \geq \alpha_if(S\cup\{i\}) - c_i \quad \text{for all } i\notin S.
\end{aligned}
\]
Note that each agent $i\in S$ has expected utility $\alpha_if(S)-c_i$, while each $i\notin S$ has expected utility $\alpha_if(S)$. 
Then for contract $\alpha$ and the equilibrium where agents in $S$ exert effort, the principal's \emph{utility} is 
\[
(1-\sum_{i=1}^n\alpha_i) f(S).
\]
The principal pursues the contract that maximizes her utility. Then for a fixed set of agents $S$, the contract $\alpha$ that maximizes the principal's utility among all contracts in which the set of agents exerting effort is exactly $S$, is
\[
\begin{aligned}
    &\alpha_i=\frac{c_i}{f(i \, | \, S\setminus \{i\})} \quad &\text{for all } i\in S, \\
    &\alpha_i=0 \quad & \text{for all } i\notin S.
\end{aligned}
\]
For any $S\subseteq N$, define
\[
g(S):=\left( 1-\sum_{i\in S} \frac{c_i}{f(i \,| \, S\setminus \{i\})} \right) f(S),
\]
the principal's optimal utility for the contracts where the set of agents exerting effort is $S$.
% incentivizes $S$ to exert effort.
Therefore, the problem of finding the contract with the maximum utility of the principal is equivalent to $\max_{S\subseteq 2^N} g(S)$, which, however, is known to be NP-hard \cite{dutting2023multi}.

\paragraph{Approximately non-discriminatory contracts.}
We now introduce non-discriminatory (linear) contracts. 
Informally, a contract is non-discriminatory (ND) if the payments assigned to agents who exert effort are identical.
The ND requirement does not apply to the agents who choose to shirk.
For a fixed set of agents $S$, the ND contract $\alpha$ that incentivizes $S$ and maximizes the principal's utility satisfies
\[
\begin{aligned}
    &\alpha_i=\max\limits_{j\in S}\frac{c_j}{f(j \, | \, S\setminus \{j\})} \quad &\text{for all } i\in S, \\
    &\alpha_i=0 \quad & \text{for all } i\notin S.
\end{aligned}
\]
For any $S\subseteq N$, define
\[
\alpha_S:=\max_{j\in S}\frac{c_j}{f(j \, | \, S\setminus \{j\})}.
\]
The contract that pays $\alpha_S$ to each agent in $S$ and zero to all other agents incentivizes exactly the agents in $S$ and attains the minimum total payment among all contracts that induce effort by $S$.
% {\color{red}We refer to this as the optimal ND contract that induces effort by $S$.}
We also define
\[
g_{ND}(S):=\left( 1-|S|\alpha_S \right) f(S).
\]
The interpretation of $g_{ND}(S)$ is the principal's optimal utility under the ND contract that induces effort by $S$.
% in which the set of agents exerting effort is $S$. {\color{red} [explain the interpretation of $\alpha_S$ ]}
% when she incentivizes $S$ using ND contracts.
Therefore, in this context, the principal faces the optimization problem of $\max_{S\subseteq 2^N}g_{ND}(S)$. 
Throughout the paper, we refer to contracts without the ND constraint as \emph{unconstrained} contracts.
When incentivizing the same set of agents $S$, the payments for agents under the optimal ND contract are weakly higher than those under the unconstrained contract. 
Hence, the non-discrimination requirement reduces the principal’s utility.

In addition to the exact non-discrimination, we are also concerned with approximately non-discriminatory contracts, in which payments assigned to agents who exert effort do not differ substantially.
To measure the disparity among agents' payments, we use a multiplicative factor $\beta$ with $\beta\geq 1$, standing for the wage ratio.
For a \emph{$\beta$-non-discriminatory} ($\beta$-ND) contract $\alpha$, if it incentives $S$ to exert effort, then for any $i,j\in S$ with $\alpha_i\le \alpha_j$, $\frac{\alpha_j}{\beta} \le \alpha_i\leq \alpha_j$ holds.
For a fixed set of agents $S$, the $\beta$-ND contract $\alpha$ that incentivizes $S$ and maximizes the principal's utility satisfies
\[
\begin{aligned}
    &\alpha_i= \max\left\{ \frac{c_i}{f(i \,| \,S\setminus\{i\})},\frac{1}{\beta} \alpha_S
    %\max\limits_{j\in S}\frac{c_j}{f(j \, | \, S\setminus \{j\})} 
    \right\} \quad &\text{for all } i\in S, \\
    &\alpha_i=0 \quad & \text{for all } i\notin S.
\end{aligned}
\]
For any $S\subseteq N$ and $\beta\geq 1$, define $g_{ND}(\beta,S)$ as
\[
\left( 1- \sum_{i \in S} \max\left\{ \frac{c_i}{f(i \,| \,S\setminus\{i\})},\frac{1}{\beta} \alpha_S \right\} \right)f(S),
\]
which stands for the principal's utility under the optimal $\beta$-ND contract in which the set of agents exerting effort is $S$.
% when she incentivizes $S$ using $\beta$-ND contracts.
Note that $g_{ND}(1,S)=g_{ND}(S)$. For $\beta$-ND contracts, the optimization problem for principal is $\max_{S\subseteq 2^N}g_{ND}(\beta,S)$.

% \begin{figure}[h] 
%     \centering 
%     \includegraphics[width=1\linewidth]{EDIT HERE!!/lem2_half.png} 
%     \caption{Our partition of group $G_1,...,G_k$. Each group has double size of the last one, except for the last group. The optimal payment of group $k$ stands for the standardized payment when incentivizing that single group only.} \label{fig:lem2} 
% \end{figure}

\paragraph{The price of non-discrimination.}

We quantify the optimal utility loss due to the requirement of non-discrimination through the framework of the price of $\beta$-non-discrimination, a function with respect to $\beta$. 
Informally, it is the supremum ratio between the principal's optimal utility of all contracts and the principal's optimal utility of all $\beta$-ND contracts.
Denote by $\mathcal{I}$ the set of problem instances. For any instance
%{\color{red}instance [dk: should we say what is an instance?]} 
$I\in \mathcal{I}$, let $\OPT(I)$ be the principal's optimal utility of all contracts and let $\OPT_{ND}(\beta,I)$ be the principal's optimal utility of all $\beta$-ND contracts.

\begin{definition}
    For any $\beta\geq 1$, the price of $\beta$-non-discrimination is
    \[
    \PoND(\beta) := \sup\limits_{I\in \mathcal{I}} \frac{\OPT(I)}{\OPT_{ND}(\beta,I)}.
    \]
\end{definition}
When $I$ and $\beta$ are clear from the context, we use $\OPT$ and $\OPT_{ND}$ to refer to $\OPT(I)$ and $\OPT_{ND}(\beta,I)$, respectively. Also, $\PoND$ denotes $\PoND(1)$ for simplicity.

\section{Logarithmic PoND}\label{sec:nd}
In this section, we study the price of (exact) non-discrimination. 
We show that there always exists a non-discriminatory contract achieving at least $\frac{1}{\log n}$ of the unconstrained optimum, and this is almost tight, as there exist instances where the ratio is at most $O(\frac{\log \log n}{\log n})$.
% On one hand, we prove that there always exists an non-discriminatory contract that brings the principal at least $\frac{1}{\log n}$ times the optimal utility under unconstrained contracts. On the other hand, we show that this ratio is nearly tight by giving an example where the ratio cannot exceed $O(\frac{\log \log n}{\log n})$. 

\begin{theorem}\label{nd}
    For $n$ agents with submodular reward function $f$, the price of 1-non-discrimination is at least $\Omega(\frac{\log n}{\log \log n})$ and at most $O(\log n)$.
\end{theorem}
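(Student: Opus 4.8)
The plan is to fix an instance and let $S^\star$ be the optimal unconstrained set, so $\OPT = g(S^\star) = (1-\sum_{i\in S^\star}\alpha_i^\star)f(S^\star)$ where $\alpha_i^\star = c_i/f(i\mid S^\star\setminus\{i\})$. The obstruction to going non-discriminatory is that the single worst ratio $\alpha_{S^\star} = \max_{j\in S^\star}\alpha_j^\star$ may be much larger than the average, so paying everyone $\alpha_{S^\star}$ blows up the total payment by a factor of $|S^\star|$ in the worst case. The key idea is a bucketing argument: partition $S^\star$ according to the dyadic scale of $\alpha_i^\star$, i.e. group $j$ by $\lfloor\log(1/\alpha_j^\star)\rfloor$ (or by $\lfloor \log \alpha_j^\star\rfloor$). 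Since each $\alpha_j^\star \le 1$ and each $\alpha_j^\star \ge c_{\min}/1$ is bounded below in any instance where agents matter, there are effectively $O(\log n)$ relevant buckets after we argue that very small $\alpha_j^\star$ contribute negligibly (agents whose indifference payment is tiny can be dropped or treated in one extra bucket without hurting $f$ much by submodularity). On the bucket $B$ with the largest contribution to $\OPT$, all agents have $\alpha_j^\star$ within a factor of $2$ of each other, so the non-discriminatory contract on $B$ that pays each agent $\max_{j\in B}\alpha_j^\star \le 2\alpha_B'$ (where $\alpha_B'$ is the min in the bucket) costs at most a constant factor more than $\sum_{j\in B}\alpha_j^\star$. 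One then needs $f(B) \ge$ (something like) $f(S^\star)/O(\log n)$ for the best bucket and $(1-|B|\alpha_B)\ge$ a constant times the analogous quantity for $S^\star$; the first follows from submodularity/subadditivity of $f$ along the $O(\log n)$ buckets, and the second from the within-bucket factor-$2$ control combined with the fact that $\sum_{j\in B}\alpha_j^\star$ is at most the total $\sum_{j\in S^\star}\alpha_j^\star < 1$.

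\medskip

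\noindent The main obstacle in the upper bound is coupling the two multiplicative losses correctly: restricting to a sub-bucket $B$ both decreases $f$ (handled by pigeonholing the best of $O(\log n)$ buckets, using $f(S^\star)\le \sum_B f(B)$ by subadditivity, which holds for monotone submodular $f$ with $f(\emptyset)\ge 0$) and changes the ``profit margin'' $1-\text{(total payment)}$. I would argue that for the best bucket the margin stays bounded below by a constant: within a factor-$2$ bucket, $|B|\alpha_B \le 2\sum_{j\in B}\alpha_j^\star \le 2\sum_{j\in S^\star}\alpha_j^\star$; if $\sum_{j\in S^\star}\alpha_j^\star \le 1/2$ the margin is already $\ge 1/2$ times the unconstrained margin up to the factor $2$, and if it is between $1/2$ and $1$ a slightly more careful split (or scaling down one agent's bucket) is needed — this edge case near margin zero is where the constant in the $O(\log n)$ gets pinned down, and is the fussiest part.

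\medskip

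\noindent \textbf{Lower bound plan.} For the $\Omega(\log n/\log\log n)$ side it suffices, by the last sentence of Theorem~\ref{nd}, to exhibit an \emph{additive} instance. The plan is to take $f$ additive with $f(i\mid S) = p_i$ for all $S$, so $\alpha_i^\star = c_i/p_i$ and the unconstrained optimum decomposes agentwise. I would choose a geometric family: for $k = \Theta(\log n/\log\log n)$ scales, put roughly $m^{t}$ agents at scale $t$ (for an appropriate base $m$, e.g. $m=\log n$) with $p_i$ and $c_i$ tuned so that each agent's individual contribution to $\OPT$ is comparable across scales — e.g. $c_i/p_i = 2^{-t}$ and $p_i$ chosen so that $\sum p_i$ over the chosen set is $\le 1$ and each scale contributes a $1/k$ share. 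Then $\OPT \approx \sum_t (\text{scale-}t\text{ contribution}) = \Theta(1)$. Any non-discriminatory contract must pay every incentivized agent the \emph{same} $\alpha$; an agent at scale $t$ is willing to work iff $\alpha \ge 2^{-t}$, so fixing $\alpha$ restricts effort to agents at scales $\ge t_0 := \lceil\log(1/\alpha)\rceil$, and among those the total payment is $\alpha$ times the number of such agents, which — because the agent counts grow geometrically — is dominated by the largest scale and forces the margin $1-|S|\alpha$ to be bounded away from optimal unless $|S|$ is small, in which case $f(S)$ is small. Optimizing $\alpha$ over all choices, the best non-discriminatory utility is only a $\Theta(1/k) = \Theta(\log\log n/\log n)$ fraction of $\OPT$. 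The delicate point here is calibrating the two parameters (number of scales $k$ and per-scale multiplicity $m$) against the single constraint $n = \sum_t m^t$ and the probability budget $\sum_i p_i \le 1$, so that \emph{every} choice of the common payment $\alpha$ loses a factor $\Omega(k)$; getting $k$ up to $\log n/\log\log n$ rather than merely $\log n / \log\text{(something larger)}$ is exactly why the multiplicities must grow polylogarithmically, and verifying that no clever $\alpha$ escapes the bound is the crux of the construction.
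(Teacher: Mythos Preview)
Your lower-bound plan is essentially the paper's: an additive instance with geometrically many agents at geometrically decreasing $\alpha$-scales, each scale contributing equal reward. The paper parametrizes it as $m=\lceil\log n\rceil$ groups with $|G_k|=2^{k-1}$, $p_i=\tfrac{1}{m2^{k-1}}$, $\alpha_i=\tfrac{1}{Tm2^{k-1}}$, and extracts the $\log\log n$ in the analysis (any ND set achieving reward $\lambda$ must contain $\Omega(2^{\lambda m})$ agents, forcing $\lambda=O(\tfrac{\log m}{m})$). Your version with $k=\Theta(\tfrac{\log n}{\log\log n})$ scales and base $m=\log n$ is a reparametrization of the same idea and would work once the details are filled in.

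Your upper bound, however, has a genuine gap. Value-bucketing by $\lfloor\log(1/\alpha_j^\star)\rfloor$ does \emph{not} give $O(\log n)$ buckets: the $\alpha_j^\star$ have no a priori lower bound, and your proposed fix of lumping all agents with $\alpha_j^\star<1/n$ into one extra bucket $B_0$ destroys exactly the factor-$2$ control you need. On $B_0$ the values can span many orders of magnitude, so $|B_0|\,\alpha_{B_0}$ can be $\Theta(1)$ while $\sum_{j\in B_0}\alpha_j^\star$ is tiny (take $\alpha_j^\star=2^{-j}$ with additive equal-$p_j$ reward: $B_0$ has $n-\log n$ agents, $\alpha_{B_0}\approx 1/n$, so $|B_0|\alpha_{B_0}\approx 1$, yet $\sum_{B_0}\alpha_j^\star\approx 1/n$). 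Separately, even on the ``good'' factor-$2$ buckets your margin claim is wrong: from $|B|\alpha_B\le 2\sum_{S^\star}\alpha_j^\star$ you get margin ratio $(1-2x)/(1-x)$ with $x=\sum_{S^\star}\alpha_j^\star$, and this tends to $0$ as $x\to 1/2^-$, so ``$\ge 1/2$ when $x\le 1/2$'' fails for $x\in(1/3,1/2]$.

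The paper avoids both problems by partitioning \emph{by rank, not by value}: sort the agents of $S^\star$ in decreasing order of $\alpha_{i,N}:=c_i/f(i\mid N\setminus\{i\})$, and let $G_k$ be the block of $2^{k-1}$ consecutive agents starting at index $2^{k-1}$ (so $|G_1|=1$, $|G_2|=2$, etc.). This gives exactly $\lceil\log n\rceil$ groups regardless of the spread of the $\alpha$-values. The key inequality is
\[
|G_k|\cdot\alpha_{G_k}\;\le\; |G_k|\cdot\max_{j\in G_k}\alpha_{j,N}\;\le\;\sum_{i=1}^{|G_k|}\alpha_{i,N}\;\le\;\sum_{i\in N}\alpha_{i,N},
\]
using that the max in $G_k$ is at most the $|G_k|$-th largest $\alpha_{i,N}$ (because $|G_1|+\cdots+|G_{k-1}|\ge|G_k|-1$). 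This is an \emph{exact} bound, not a factor-$2$ one, so the margin of every group is at least the unconstrained margin, and submodularity gives $\sum_k g_{ND}(G_k)\ge f(N)(1-\sum_i\alpha_{i,N})=\OPT$. Pigeonhole then finishes. The rank-based doubling is the idea your proposal is missing.
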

    % For $n$ agents with submodular valuation functions, the PoN is $\rho=O(\log n)$. Moreover, there exists an instance such that the PoN is $\rho=\Omega(\frac{\log n}{\log \log n})$.

The difficulty for deriving the $O(\log n)$ upper bound lies in identifying the set of agents to be incentivized by the contract.
A natural greedy approach that adds agents sequentially, for example in decreasing order of ``price to performance ratio\footnote{For an agent, his price-performance ratio is the ratio between his marginal contribution and his payment in the optimal contract.}'', is unlikely to work.
% The natural idea of including agents one by one in some manner, for example including agents in the descending order of ``price-performance ratio\footnote{For an agent, her price-performance ratio is the ratio between her marginal contribution and her reward in the optimal contract.}'', highly unlikely works.
The rationale is that including an agent with huge payment gap can dramatically reduce the principal’s utility, since under non-discrimination all incentivized agents are equally paid. 
Skipping that high‑reward agent does not necessarily resolve the issue, because the current set of incentivized agents may already be at a threshold where adding any additional agent pushes the total payment above 1.

Instead, we design the contract based on a particular configurations of agents. 
Begin with the optimal unconstrained contract in which the agents in $S^*$ are incentivized to exert effort. 
We partition $S^*$ into several groups $G_1,\ldots,G_m$, such that (1) the total payment to each group under ND contracts does not exceed the total payment to $S^*$ in the optimal unconstrained contract, 
and (2) the sum of the principal’s reward over the groups under ND contract is at least the reward under the optimal unconstrained contract.

% The proof of the $O(\log n)$ bound has a simple idea: we focus only the set of incentivized agents $S^*$ in the optimal personalized contract. Since the $\alpha$ value may vary greatly among $S^*$, we carefully divide the agents into some blocks, such that in each block all agents are able to be incentivized without discrimination.

\begin{lemma} \label{nd_ub}
    Suppose $S^*$ is the set of agents exerting effort in the optimal unconstrained contract.
    When $f$ is submodular, there exists a subset of agents $S\subseteq S^*$ such that 
    \[
    g_{ND}(S)=f(S)(1-|S| \alpha_S)\ge \frac{1}{\lceil\log n\rceil} g(S^*).
    \]
\end{lemma}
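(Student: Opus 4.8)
The plan is to partition $S^*$ into at most $\lceil \log n \rceil$ groups according to the magnitude of each agent's individual payment $p_i := c_i / f(i \mid S^* \setminus \{i\})$ in the optimal unconstrained contract, and then argue that at least one group, when incentivized alone under an ND contract, recovers a $\frac{1}{\lceil \log n \rceil}$ fraction of $g(S^*)$. Write $P := \sum_{i \in S^*} p_i < 1$, so that $g(S^*) = (1-P) f(S^*)$. Order the agents of $S^*$ so that $p_1 \ge p_2 \ge \cdots \ge p_{|S^*|}$; since $f$ is submodular, $f(i \mid S^* \setminus \{i\}) \le f(i \mid T \setminus \{i\})$ for any $T \subseteq S^*$ containing $i$, hence $\alpha_T = \max_{j \in T} c_j / f(j \mid T \setminus \{j\}) \le \max_{j \in T} p_j$. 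This is the key monotonicity fact: for any subgroup $G$, the ND payment rate $\alpha_G$ is no larger than the largest unconstrained rate among agents in $G$.

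\medskip

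The grouping itself: form geometric ``buckets'' by the size of $p_i$. Specifically, let $G_k := \{ i \in S^* : P/2^{k} < p_i \cdot |S^*| \le P/2^{k-1} \}$ or a similar dyadic partition keyed to $p_i$ relative to $P$; a cleaner variant is to split the sorted list into $\lceil \log n \rceil$ consecutive blocks where block $k$ collects agents whose rank lies in $(2^{k-1}, 2^{k}]$, so that $|G_k| \le 2^{k-1}$ and every agent in $G_k$ has $p_i \le p_{2^{k-1}} \le P / 2^{k-1}$ (the last inequality because the $2^{k-1}$ agents of ranks $1,\dots,2^{k-1}$ each have payment at least $p_{2^{k-1}}$ and together sum to at most $P$). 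Then for the ND contract incentivizing $G_k$ alone, the total payment is $|G_k| \cdot \alpha_{G_k} \le |G_k| \cdot \max_{i \in G_k} p_i \le 2^{k-1} \cdot \frac{P}{2^{k-1}} = P < 1$, so each group is individually affordable and $g_{ND}(G_k) = (1 - |G_k| \alpha_{G_k}) f(G_k) \ge (1-P) f(G_k)$. By monotonicity and submodularity of $f$, $\sum_k f(G_k) \ge f\!\left(\bigcup_k G_k\right) = f(S^*)$ (subadditivity, which follows from submodularity plus $f(\emptyset) \ge 0$). Hence $\sum_{k=1}^{\lceil \log n\rceil} g_{ND}(G_k) \ge (1-P) \sum_k f(G_k) \ge (1-P) f(S^*) = g(S^*)$, and averaging gives some $k$ with $g_{ND}(G_k) \ge \frac{1}{\lceil \log n \rceil} g(S^*)$, proving the lemma with $S = G_k$.

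\medskip

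The main obstacle I anticipate is making the dyadic bound $\max_{i \in G_k} p_i \le P/|G_k|$ airtight for \emph{every} block simultaneously, since the naive ``rank $2^{k-1}$ agent has payment $\le P/2^{k-1}$'' argument needs the blocks to be nested prefixes and requires care when $|S^*| < n$ or when $|S^*|$ is not a power of two (the last block may be short, which only helps). A secondary subtlety is that $\alpha_{G_k}$ could in principle be forced up if some agent in $G_k$ has a small marginal $f(i \mid G_k \setminus \{i\})$ — but submodularity guarantees $f(i \mid G_k \setminus \{i\}) \ge f(i \mid S^* \setminus \{i\})$ since $G_k \setminus \{i\} \subseteq S^* \setminus \{i\}$, so $\alpha_{G_k} \le \max_{i\in G_k} p_i$ as claimed, and this is exactly where submodularity is indispensable. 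Everything else is routine: the number of blocks is $\lceil \log_2 |S^*| \rceil \le \lceil \log n \rceil$, and the subadditivity and averaging steps are immediate.
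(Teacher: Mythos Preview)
Your proposal is correct and matches the paper's argument essentially step for step: sort $S^*$ by descending $p_i$, partition into $\lceil \log n\rceil$ geometrically growing rank blocks, bound $\alpha_{G_k}\le \max_{i\in G_k}p_i$ via submodularity, use the prefix-sum trick to get $|G_k|\,\alpha_{G_k}\le P$, and finish with subadditivity of $f$ plus averaging. The one detail to fix is that your intervals $(2^{k-1},2^k]$ omit rank~$1$; the paper takes $G_k$ to be ranks $2^{k-1},\dots,2^k-1$ (so $G_1=\{1\}$, $G_2=\{2,3\}$, \dots), which covers everyone and makes the bound $|G_k|\cdot\max_{i\in G_k}p_i \le \sum_{i\le |G_k|} p_i \le P$ hold cleanly for every block including the first.
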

\begin{proof}
    Since $|S^*|\leq n$, it suffices to show $g_{ND}(S)\geq \frac{1}{\lceil \log |S^*| \rceil}g(S^*)$.
    Assume without loss of generality that the optimal unconstrained contract incentivizes all $n$ agents, i.e. $S^*=N$.
    For each $i\in N$, define $\alpha_{i,N}=\frac{c_i}{f(i \, | \, N \setminus \{i\})}$. Reindex the agents according to a descending order of $\alpha_{i,N}$, such that $\alpha_{1,N}\ge ...\ge \alpha_{n,N}$.

    We now partition $N$ into $m=\lceil\log n \rceil$ groups $G_1,\ldots,G_m$ as follows: for each $k\in [m-1]$, group $G_k$ contains $2^{k-1}$ consecutive agents $2^{k-1},2^{k-1}+1,\ldots,2^k-1$; the remaining agents form group $G_m$.
    Informally, group $G_m$ contains the half of $N$ with smaller $\alpha_{i,N}$, and group $G_{m-1}$ contains the half of $N\setminus G_m$ with smaller $\alpha_{i,N}$, and so on.
    We illustrate this partition in Figure \ref{fig:lem2}.

\begin{figure*}[t]
    \centering
    \includegraphics[width=1\textwidth]{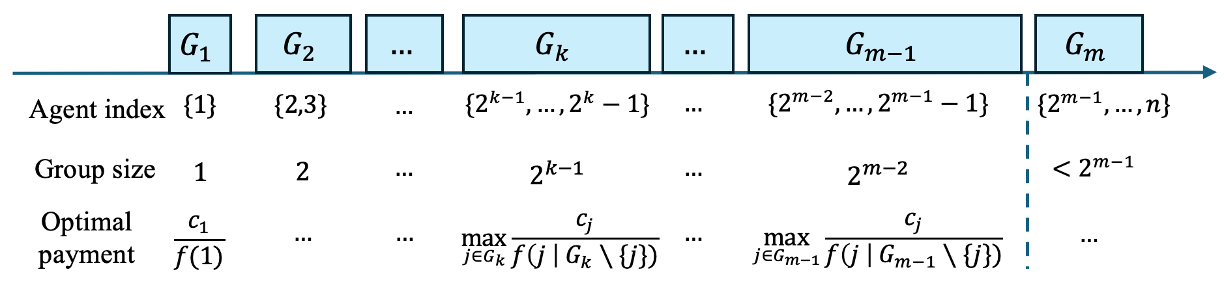}
    \caption{Our partition of group $G_1,...,G_k$ in Lemma \ref{nd_ub}. Each group is double the size of the previous one, except for the last group. The optimal payment of group $k$ stands for the standardized payment when incentivizing that single group only.}
    \label{fig:lem2}
\end{figure*}

    Consider incentivizing all agents in a single group $G_k$ under ND contracts.
    The optimal ND contract that incentivizes exactly $G_k$ pays each $i \in G_k$ the amount
    $\alpha_{G_k}=\max_{j\in G_k}{\frac{c_j}{f(j \,|\,G_k \setminus \{j\}) }} $ and pays zero to agents in $N\setminus G_k$.
    The total payment to agents in $G_k$ equals
    \begin{align}\label{eq::log-1}
    \begin{split}
        &|G_k|\cdot \alpha_{G_k} =|G_k|\cdot\max_{j\in G_k}{\frac{c_j}{f(j \,|\,G_k \setminus \{j\}) }}\\
        \overset{(a)}{\le} &|G_k|\cdot\max_{j\in G_k}{\frac{c_j}{f(j \,|\,N \setminus \{j\}) }} \\
        \overset{(b)}{\le}&\sum_{i=1}^{|G_k|-1}{\frac{c_i}{f(i \,|\,N \setminus \{i\})}}+\max_{j\in G_k}{\frac{c_j}{f(j \,|\,N \setminus \{j\}) }}\\
        =&\sum_{i=1}^{|G_k|}{\alpha_{i,N}}\le \sum_{i=1}^{n}{\alpha_{i,N}},
    \end{split}
    \end{align}
    where $(a)$ holds due to $G_k\subseteq N$ and the submodularity of $f$ and $(b)$ is due to that $\alpha_{i,N}$'s are sorted in descending order and $|G_1|+|G_2|+\cdots+|G_{k-1}| \ge |G_k|-1$.
    % the number of agents whose block index strictly smaller than $k$ is at least $1+2+...+2^{k-2}=2^{k-1}-1\ge |G_k|-1$.

    Inequality (\ref{eq::log-1}) implies that for each group $G_k$, the principal's maximum utility when incentivizing $G_k$ is 
    \begin{align*}
        \begin{split}
            g_{ND}(G_k)&=f(G_k)(1-|G_k|\cdot \alpha_{G_k})\ge f(G_k)(1-\sum_{i=1}^{n}{\alpha_{i,N}}).
        \end{split}
    \end{align*}
    Summing over $k\in [m]$ yields
    \begin{align*}
        \begin{split}
        \sum_{k=1}^{m}g_{ND}(G_k)&\ge \left(\sum_{k=1}^{m}f(G_k)\right)(1-\sum_{i=1}^{n}{\alpha_{i,N}})\\
        &\ge f(N)(1-\sum_{i=1}^{n}{\alpha_{i,N}})=\OPT,
        \end{split}
    \end{align*}
    where the second inequality transition is due to the submodularity of $f$. 
    The above inequality implies that the maximum of $g_{ND}(G_k)$ among all $k\in [m]$ is at least $\frac{1}{m}\OPT=\frac{1}{\lceil\log n\rceil}\OPT$. 
    %completing the proof of the lemma. 
\end{proof}

Lemma \ref{nd_ub} indicates $\PoND \le O(\log n)$.
For the lower bound, we construct an additive instance with about $\log n$ groups of agents, homogeneous within each group.
Incentivizing any group yields the same reward.
As shown in this example, the optimal unconstrained contract incentivizes all groups, while any ND contract that yields positive utility for the principal can incentivize at most $O(\frac{\log \log n}{\log n})$ groups.

\begin{lemma}\label{nd_lb}
    There exists an instance with additive $f$ such that the principal's utility under any ND contract is no greater than a $O(\frac{\log \log n}{\log n})$ fraction of that under the optimal unconstrained contract. 
    % i.e. $\PoND\ge \Omega(\frac{\log n}{\log \log n})$.
\end{lemma}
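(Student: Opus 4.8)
The plan is to exhibit an explicit additive instance with a doubling group structure. Fix a parameter $m$ and take $n=2^m-1$ agents partitioned into groups $G_1,\dots,G_m$ with $|G_k|=:n_k=2^{k-1}$. The agents within a group are identical: the additive reward $f$ assigns the same value $v_k:=R/n_k$ to every agent of $G_k$ (so that $f(G_k)=R$ for all $k$, i.e.\ incentivizing any single group yields the same reward), and every agent of $G_k$ has cost $c_k:=C/n_k^2$. I will set $R=1/m$ and $C=R/(2m)$; then $f$ is monotone and submodular (being additive) and $f(N)=\sum_k n_kv_k=mR=1\in[0,1]$. Since $f$ is additive, $f(i\mid S\setminus\{i\})=v_k$ for every $i\in G_k$ and every $S$, so the per-agent payment that incentivizes an agent of $G_k$ is always $\rho_k:=c_k/v_k=C/(n_kR)$, which is strictly decreasing in $k$.

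First I would lower-bound $\OPT$. Incentivizing all of $N$ yields principal utility $g(N)=\bigl(1-\sum_k n_k\rho_k\bigr)f(N)=\bigl(1-m\cdot C/R\bigr)\cdot mR=(1-\tfrac12)\cdot 1=\tfrac12$, so $\OPT\ge\tfrac12$. This is all we need from the unconstrained side.

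The core of the proof is the bound $\OPT_{ND}=O(\log m/m)$. Take any ND contract incentivizing a set $S$ with strictly positive principal utility and let $k_1$ be the smallest index with $S\cap G_{k_1}\ne\emptyset$. The incentive constraints force the common payment to be at least $\max_{k:S\cap G_k\ne\emptyset}\rho_k=\rho_{k_1}$, so the utility is at most $g_{ND}(S)=f(S)\bigl(1-|S|\rho_{k_1}\bigr)\le f(S)$, and positivity of the utility forces $|S|<1/\rho_{k_1}=2m\,n_{k_1}$. It therefore suffices to show $f(S)=O(R\log m)$. Put $\ell:=\lceil\log(2m+1)\rceil=O(\log m)$ and split the groups meeting $S$ into a ``near'' block $k_1\le k\le k_1+\ell-1$ and a ``far'' block $k\ge k_1+\ell$. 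For the near block, using $|S\cap G_k|\le n_k$ and $n_kv_k=R$, the contribution to $f(S)$ is at most $\ell R$. For the far block, $v_k=(R/n_{k_1})\,2^{k_1-k}\le (R/n_{k_1})\,2^{-\ell}$, so its contribution is at most $(R/n_{k_1})2^{-\ell}\cdot|S|<(R/n_{k_1})2^{-\ell}\cdot 2m\,n_{k_1}=2mR\,2^{-\ell}\le R$ since $2^\ell>2m$. Hence $f(S)\le(\ell+1)R=O(R\log m)$, so $\OPT_{ND}\le\max_S g_{ND}(S)=O(R\log m)=O(\log m/m)$. Combining with $\OPT\ge\tfrac12$ gives that the utility of every ND contract is at most an $O(\log m/m)$ fraction of $\OPT$, which is $O(\log\log n/\log n)$ because $m=\Theta(\log n)$ and hence $\log m=\Theta(\log\log n)$; this is precisely the claimed bound. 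For $n$ not of the form $2^m-1$ one takes $m=\lfloor\log(n+1)\rfloor$ and lets the last group absorb the $n-(2^{m-1}-1)$ remaining agents, which only changes constants.

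I expect the main obstacle to be the estimate $f(S)=O(R\log m)$ in the ND analysis. The point is that the doubling of group sizes lets a positive-utility ND contract place non-negligible reward on only $O(\log m)$ of the $\approx\log n$ groups: the capacity bound $|S\cap G_k|\le n_k$ controls the $O(\log m)$ groups nearest to $k_1$, while the payment budget $|S|<2m\,n_{k_1}$ forces the combined reward of all farther groups below $R$, because their per-agent values are a factor $2^{-\Omega(\log m)}$ smaller than $v_{k_1}$. Getting these two contributions to balance requires choosing the split point $\ell\approx\log m$ and tying $R$, $C$, and the group sizes together exactly as above.
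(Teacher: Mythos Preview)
Your construction is exactly the paper's instance (with their parameter $T=2$): groups $G_k$ of size $2^{k-1}$, additive $f$ with $f(\{i\})=\tfrac{1}{m2^{k-1}}$ for $i\in G_k$, and costs chosen so that $\rho_k=\tfrac{1}{2m\,2^{k-1}}$. The lower bound $\OPT\ge\tfrac12$ via $g(N)$ matches the paper as well.

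The one noteworthy difference is in the ND analysis. The paper first argues structurally that the optimal ND set $S$ must consist of consecutive groups starting at $G_{L}$ (because under a common payment the principal should pick the highest-reward agents first), and from this derives $|S|\ge 2^{L-1}(2^{\lfloor\lambda m\rfloor}-1)$, which combined with the payment budget forces $\lambda=f(S)\le\tfrac{\log(Tm+1)}{m}$. You instead bound $f(S)$ directly for \emph{any} $S$ with positive utility, without assuming structure, via the near/far split at $\ell=\lceil\log(2m+1)\rceil$: the near $\ell$ groups contribute at most $\ell R$ by the capacity bound $|S\cap G_k|\le n_k$, and the far groups contribute at most $R$ by the payment budget $|S|<2m\,n_{k_1}$. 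This gives the same $f(S)=O((\log m)/m)$ bound but avoids the structural step, so it is slightly cleaner. Both routes land on $\OPT_{ND}/\OPT=O(\log\log n/\log n)$.
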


\begin{proof}
    Let $n=2^m-1$ where $m$ is an integer.
    In the instance, the $n$ agents are classified into $m$ groups $G_1, G_2,\ldots,G_m$, where for any $k\in [m]$, group $G_k$ contains $2^{k-1}$ agents.
    Agents within a group share the same cost function reward function.
    In particular, for any $G_k$ and any $i\in G_k$,  $f(\{i\})=\frac{1}{m2^{k-1}}$ and $c_i=\frac{1}{Tm^24^{k-1}}$, where $T\geq 2$ is a constant number.
    Thus, to incentivize agent $i\in G_k$, the principal must pay at least $\frac{1}{Tm2^{k-1}}$.

    We claim that the optimal utility of the principal under the unconstrained contract is achieved by incentivizing all $n$ agents. 
    If $A$ is a strict subset of $N$,
    then there exists an agent $j \in N$ with $j\notin A$.
    Define $A':=A\cup \{j\}$.
    One can verify that the difference between the utilities at $A'$ and $A$ equals
    \[
    f(\{j\})\left(1-\sum_{i\in A } \alpha_i - \alpha_j\right) - f(A)\alpha_j.
    \]
    Observe for any $i \in N$, $\alpha_i=\frac{f(\{i\})}{T}$. Since $\sum_{i\in A}\alpha_i \leq \frac{1}{T}$ and $f(A)\leq 1$, the difference is at least
    $
    f(\{j\})\frac{T-1}{T}-\alpha_j= f(\{j\})\frac{T-2}{T}>0.
    $
    Therefore, $g(S)$ is maximized at $S=N$ with value $1-\frac{1}{T}$.
    % Observe for any $i \in N$, $\alpha_i=\frac{f(\{i\})}{T}$, thus for any $S \subseteq N$, $g(S)=f(S)(1-\frac{f(S)}{T})$. 
    % Define $h(x)=x(1-\frac{x}{T})$. Its derivative is $h'(x)=1-\frac{2x}{T}$, which is positive for $x \in (0,1)$.

Next, we turn to ND contracts and suppose that the optimal ND contract incentivizes only a set $S \subseteq N$. 
Then in this ND contract, each agent in $N\setminus S$ receives zero payment.
Define $L:=\min\{k:G_k\cap S \neq \emptyset\}$.
% denote the lowest group index, of which agents are contained in $S$.
Since each agent in $S$ receives the same payment $\frac{1}{Tm2^{L-1}}$ under the optimal ND contract, and agents in the group with a lower index have higher marginal contributions to $f$, the principal must prioritize agents from lower-index groups. 
Hence, if $S$ contains some agent(s) in $G_k$ for $k\geq L+1$, then all agents from $G_L$ to $G_{k-1}$ must belong to $S$.

Assume without loss of generality that $f(S)=\lambda\le 1$, since $f(N)=1$. 
Because every group contributes $\frac{1}{m}$ to the total reward, $S$ must contain at least $t=\floor*{\lambda m}$ entire groups, that is, $G_L\cup G_{L+1}\cup \cdots \cup G_{L+t-1} \subseteq S$.
Then the number of agents in $S$ satisfies
\[
|S|\geq 2^{L-1}\sum_{j=1}^{t}{2^{j-1}}=2^{L-1}(2^{\floor*{\lambda  m}}-1).
\]
Since each agent in $S$ receives payment $\frac{1}{Tm 2^{L-1}}$, the principal's optimal utility $g_{ND}(S)$ is at most
\begin{align*}
    \begin{split}
        %g_{ND}(S)&=f(S)(1-|S|\alpha_S)\\
        \lambda \Big( 1- \frac{2^{L-1}(2^{\floor*{\lambda m}}-1)}{Tm 2^{L-1}}\Big)
        =\lambda \Big( 1-\frac{2^{\floor*{\lambda m}}-1}{Tm }\Big).
    \end{split}
\end{align*}
Since an ND contract that incentivizes a single group yields positive utility for the principal, the upper bound on $g_{ND}(S)$ must be positive.
Then $2^{\floor*{\lambda  m}}-1\le Tm$ holds, and thus, $\lambda\leq \frac{\log (Tm+1)}{m}$. 
Therefore, it holds that
\begin{align*}
    \begin{split}
         \frac{\OPT}{\OPT_{ND}} > \frac{T-1}{T}&\frac{m}{\log (Tm+1)},
         %\\ \implies &\PoND = \Omega(\frac{\log n}{\log\log n}), 
    \end{split}
\end{align*}
implying that $\PoND = \Omega(\frac{\log n}{\log\log n})$ as $n=2^m-1$.
\end{proof}

    % We outline the proof of Lemma \ref{nd_lb} and defer its proof to Appendix \ref{a3}.
    % Let $n=2^m-1$ where $m$ is an integer.
    % In the instance, the $n$ agents are classified into $m$ groups $G_1, G_2,\ldots,G_m$, where for any $k\in [m]$, group $G_k$ contains $2^{k-1}$ agents.
    % Agents within a group share the same cost function and contribute the same reward.
    % In particular, for any $G_k$ and any $i\in G_k$,  $f(\{i\})=\frac{1}{m2^{k-1}}$ and $c_i=\frac{1}{Tm^24^{k-1}}$, where $T\geq 2$ is a constant number.

    % Firstly, we show the principal optimizes her utility by incentivizing all of the $n$ agents without non-discrimination constraint. Secondly, to maintain positive utility, ND restricts the principal to incentivize only a fraction $\lambda$ of groups. An accurate bound of $\lambda$ finally gives our approximation ratio.

Theorem \ref{nd} directly follows from Lemmas \ref{nd_ub} and \ref{nd_lb}. 
We conjecture that the price of non-discrimination is tight on the current lower bound.

\begin{conjecture}
    For $n$ agents with submodular valuation functions, $\PoND=\Theta (\frac{\log n}{\log \log n})$.
\end{conjecture}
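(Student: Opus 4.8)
The plan is to close the gap by upgrading Lemma~\ref{nd_ub} from an $O(\log n)$ to an $O(\log n/\log\log n)$ upper bound on $\PoND$; the matching lower bound is already delivered by Lemma~\ref{nd_lb}, so this would establish $\PoND=\Theta(\log n/\log\log n)$. Following the setup of Lemma~\ref{nd_ub}, I would assume $S^*=N$, put $v_i:=c_i/f(i\mid N\setminus\{i\})$, reindex so that $v_1\ge\cdots\ge v_n$, and recall $\sum_i v_i<1$ (as $\OPT>0$). The one change is to replace the rigid doubling partition by a \emph{budgeted greedy} partition with a fixed constant budget $B\in(0,1)$, say $B=\tfrac12$: scan the agents in sorted order and keep appending the next agent to the current group $G$ as long as $(|G|+1)\,v_{h(G)}\le B$, where $h(G)$ denotes the first (largest-$v$) agent of $G$; when this fails---in particular whenever $v_{h(G)}>B$, which produces a singleton---close $G$ and open a new group. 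This yields contiguous groups $G_1,\dots,G_m$; write $x_j:=v_{h(G_j)}$, so $x_1\ge x_2\ge\cdots$.

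The utility bound is then a routine variant of Lemma~\ref{nd_ub}. Submodularity gives, for every $i\in G_j$, $c_i/f(i\mid G_j\setminus\{i\})\le c_i/f(i\mid N\setminus\{i\})=v_i$, hence $\alpha_{G_j}\le\max_{i\in G_j}v_i=x_j$, and the stopping rule forces $|G_j|\,x_j\le B$, so $|G_j|\,\alpha_{G_j}\le B$ and $g_{ND}(G_j)=f(G_j)\bigl(1-|G_j|\alpha_{G_j}\bigr)\ge(1-B)f(G_j)$. Summing over $j$ and using subadditivity $\sum_j f(G_j)\ge f(N)$ (as in Lemma~\ref{nd_ub}) gives $\sum_j g_{ND}(G_j)\ge(1-B)f(N)\ge(1-B)\OPT$, so the best group satisfies $g_{ND}(G_j)\ge\frac{1-B}{m}\OPT$, and it remains only to show $m=O(\log n/\log\log n)$. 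The point of the constant budget is exactly this trade-off: one gives up a constant factor $1-B$ in the per-group utility but gains a $\log\log n$ factor in the number of groups.

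For the group count, set $\rho_j:=x_{j+1}/x_j\in(0,1]$. For every non-last group with $x_j\le B$ one has $|G_j|=\lfloor B/x_j\rfloor\ge B/(2x_j)$, and at most $1/B$ groups have $x_j>B$ (each such singleton contributes more than $B$ to $\sum_i v_i<1$). Two estimates follow. (i) Every value inside $G_j$ is at least $x_{j+1}$, so $\sum_i v_i\ge\sum_j|G_j|x_{j+1}\ge\frac{B}{2}\sum_j\rho_j$, giving $\sum_j\rho_j=O(1/B)=O(1)$. (ii) From $\sum_j|G_j|=n$ and $1/x_j=\frac{1}{x_1}\prod_{k<j}\rho_k^{-1}$, taking the largest term shows $\prod_k\rho_k^{-1}=O(n)$, i.e.\ $\sum_k\log(1/\rho_k)=O(\log n)$. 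Now run a double-counting argument: for a threshold $c>0$, estimate (i) implies at most $O(2^c)$ of the $\rho_k$ exceed $2^{-c}$, so at least $m-O(2^c)$ of them contribute at least $c$ each to the sum in (ii); choosing $c\asymp\log m$ yields $\Omega(m\log m)=O(\log n)$, hence $m=O(\log n/\log\log n)$. Together with Lemma~\ref{nd_lb}, this gives $\PoND=\Theta(\log n/\log\log n)$, proving the conjecture.

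I expect the main obstacle to be the group-count step: one has to handle cleanly the floors in $|G_j|=\lfloor B/x_j\rfloor$, the bounded block of singleton groups that can appear at the front of the scan, and the last (possibly truncated) group, none of which change the asymptotics but all of which clutter estimates (i)--(ii) and the final double counting. The utility half of the argument is essentially immediate once the budgeted partition is in place.
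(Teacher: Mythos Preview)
This statement is labeled a \emph{Conjecture} in the paper and is left open there: the paper proves only the upper bound $\PoND=O(\log n)$ (Lemma~\ref{nd_ub}) and the lower bound $\PoND=\Omega(\log n/\log\log n)$ (Lemma~\ref{nd_lb}), and explicitly states the tight bound as a conjecture. There is therefore no proof in the paper to compare against; you are attempting to resolve an open problem.

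That said, your argument appears to be essentially correct and, with the details filled in carefully, would settle the conjecture. Both key ideas are sound. The utility step is routine: for every group with $x_j\le B$ (including the truncated last group) one has $|G_j|\alpha_{G_j}\le|G_j|x_j\le B$, whence $\sum_jg_{ND}(G_j)\ge(1-B)\sum_jf(G_j)\ge(1-B)f(N)\ge(1-B)\OPT$ by subadditivity. The group-count step also checks out: from $1>\sum_iv_i\ge\sum_{j<m}|G_j|x_{j+1}$ and $|G_j|\ge B/(2x_j)$ one gets $\sum_{j<m}\rho_j<2/B+O(1)$; from $|G_{m-1}|=\lfloor B/x_{m-1}\rfloor\le n$ one gets $x_1/x_{m-1}\le (n+1)/B$ and hence $\sum_{k\le m-2}\log(1/\rho_k)=O(\log n)$; and splitting indices at the threshold $\rho_k\gtrless 2^{-c}$ gives $m-2\le O(2^c)+O(\log n)/c$, which is $O(\log n/\log\log n)$ for an appropriate $c$.

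A few points deserve care in a full write-up. First, the double-counting step as written is slightly loose: taking $c=\log m$ literally makes $O(2^c)=O(m)$ swallow the main term. You need $c=\theta\log m$ for some fixed $\theta<1$ (so $2^c=m^\theta=o(m)$), or equivalently choose $c$ just below $\log\log n$; this is clearly what you intend. Second, there is at most one singleton group with $x_1>B$ (since $\sum_iv_i<1$ and $B=\tfrac12$), and for it $|G_1|x_1\le B$ fails; but then $1-B\ge 1-\sum_iv_i$, and using $\alpha_{G_1}\le v_1\le\sum_iv_i$ for that group together with $|G_j|x_j\le B$ for the rest gives the uniform bound $g_{ND}(G_j)\ge(1-\sum_iv_i)f(G_j)$ and hence $\sum_jg_{ND}(G_j)\ge\OPT$. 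Third, estimate~(ii) should be read through $|G_{m-1}|\le n$ rather than the truncated $G_m$, which you already anticipate; this costs only an additive constant in $m$. None of these affect the asymptotics.
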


\section{Constant-Factor PoND(\texorpdfstring{$\beta$}{beta})}\label{sec:bnd}

In the bad example, a non-discriminatory contract yields about $O(\frac{1}{\log n})$ of the optimal utility, driven by large disparities in unconstrained payments across agents. A natural question raised is how much the price of non-discrimination improves if the ratio of the maximum to minimum payment among incentivized agents is bounded by a factor $\beta$.
% can the price of non-discrimination be a constant? 

In this section, we show that when $\beta=n^\delta$ for some constant $\delta \in (0,1]$
% \footnote{{\color{red}For the case of $\delta>1$, it is trivial to achieve $O(1)$-approximation when $n$ is sufficiently large, as the optimal unconstrained contract satisfies $n^\delta$-ND constraint.}}
, $\PoND(\beta)$ is bounded by a constant. The case when $\delta>1$ follows from the remark below:

% Recall that our definition of $\beta$-non-discriminatory contract allows the principal to {\color{red}transfer at most $\frac{1}{\beta}$ of the highest payment to some agents who are 
% ``easy" to incentivize.} 
% In this section we give the answer: for $\delta \in (0,1]$ as a constant\footnote{{\color{red}For the case of $\delta>1$, it is trivial to achieve $O(1)$-approximation when $n$ is sufficiently large, as the optimal unconstrained contract satisfies $n^\delta$-ND constraint.}
% }, if $\beta=n^\delta$, constant-factor $\PoND(\beta)$ ratio will be achieved, and the ratio is also nearly tight. 
% We split the proof of Theorem \ref{bnd} into Lemma \ref{bnd_ub} to \ref{bnd_1.6lb}.

\begin{remark}\label{r1}
    For $\delta>1$, one can achieve exact-approximation with sufficiently large $n$: For agents $N$ incentivized under the optimal contract, the total payment with and without ND differs by at most $\frac{(n-1)\alpha_{i}}{n}\rightarrow0$, where $\alpha_{i}$ is the maximum payment value among $N$. 
    Thus, we restrict our attention to $\delta \in (0,1]$. 
\end{remark}

\begin{theorem}\label{bnd}
    Suppose $f$ is submodular. 
    For any $\delta \in (0,1]$ and sufficiently large $n$, the price of $n^\delta$-non-discrimination is at most $\lceil \frac{1}{\delta} \rceil+1$ and at least $\max\{2, \frac{1}{\delta+O(\frac{\log\log n}{\log n})}\}$ when $\delta \in (0,1)$ and at least $1.591$ when $\delta=1$.
\end{theorem}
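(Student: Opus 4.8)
The plan is to prove Theorem~\ref{bnd} by establishing four separate bounds, each via a self-contained construction or argument, and then combining them.

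\textbf{Upper bound $\lceil 1/\delta\rceil + 1$.} I would start from the optimal unconstrained set $S^*$, and as in Lemma~\ref{nd_ub}, reindex agents in descending order of $\alpha_{i,S^*} = c_i/f(i\mid S^*\setminus\{i\})$. The key observation is that when we are allowed a payment ratio of $\beta = n^\delta$, a single group in the $\beta$-ND contract can be much larger than in the exact-ND case: if a group $G$ consists of consecutive agents in the sorted order and $|G|\le \beta$, then paying everyone in $G$ the amount $\alpha_G$ only loses a bounded factor, because $\alpha_G / (\min_{j\in G}\alpha_{j,S^*}) \le \beta$ by the sorting (the ratio of largest to smallest $\alpha$-value within a block of size $\le\beta$ starting from position $2^{k-1}$ is at most the ratio across the block, which we can control). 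More carefully, I would partition $S^*$ into geometrically growing blocks but now with ratio $\beta$ rather than $2$: block $k$ has roughly $\beta^{k-1}$ agents, so we need only $\lceil \log_\beta n\rceil = \lceil 1/\delta\rceil$ blocks to cover all $n \ge |S^*|$ agents. Running the same telescoping argument as in \eqref{eq::log-1} — with $|G_k|\cdot\alpha_{G_k}$ now compared against $\sum_{i} \alpha_{i,S^*}$ using that the prefix of blocks before $G_k$ has size at least $|G_k|/\beta$, hence $|G_k|\alpha_{G_k}\le \beta\cdot(\text{prefix sum}) \le \dots$ — gives that each block, run as a $\beta$-ND contract, has total payment at most roughly $\sum_i\alpha_{i,S^*}$, possibly with an extra $+1$ slack term from the boundary, which is where the ``$+1$'' in $\lceil 1/\delta\rceil+1$ comes from. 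Submodularity of $f$ then gives $\sum_k f(G_k)\ge f(S^*)$, so some block achieves a $1/(\lceil 1/\delta\rceil+1)$ fraction of $\OPT$. I expect the main obstacle here to be handling the within-block payment-ratio constraint cleanly: one must verify that the $\beta$-ND contract incentivizing a block actually respects the ratio $\beta$ between max and min payment, and that the ``$\max\{c_i/f(i\mid\cdot),\alpha_S/\beta\}$'' formula from the preliminaries behaves well across blocks — this forces the careful choice of block sizes and possibly a loss absorbed into the extra additive constant.

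\textbf{Lower bound $\max\{2,\ 1/(\delta+o(1))\}$ for $\delta\in(0,1)$.} For the $1/(\delta+o(1))$ part I would reuse the additive instance of Lemma~\ref{nd_lb} (the $m$ homogeneous groups with $f(\{i\})=1/(m2^{k-1})$ and $c_i = 1/(Tm^2 4^{k-1})$, $n = 2^m-1$). The argument there showed that any ND contract with positive utility incentivizes a set $S$ with $f(S)=\lambda \le \log(Tm+1)/m$. Now with a $\beta=n^\delta$-ND contract, the per-agent payment floor is relaxed by a factor $\beta$, so one can afford to include roughly $\beta$ times as many of the cheap high-index agents; redoing the counting, $|S|\ge 2^{L-1}(2^{\lfloor\lambda m\rfloor}-1)$ but the payment per agent drops, and positivity now forces $2^{\lfloor\lambda m\rfloor}-1 \lesssim \beta Tm$, i.e.\ $\lambda \lesssim (\log\beta + \log(Tm))/m = \delta\log n/m + o(\log n/m) = \delta + o(1)$ (since $n=2^m-1$ so $\log n \approx m$). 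Hence $\OPT/\OPT_{ND} \ge (1-1/T)/(\delta+o(1))$, and letting $T\to\infty$ gives the claimed $1/(\delta+o(1))$, matching the $O(\log\log n/\log n)$ correction term in the theorem. For the ``$\ge 2$'' part (which dominates when $\delta$ is close to $1$) I would give a small tailored instance — likely two ``types'' of agents, one type cheap to incentivize alone, another expensive — engineered so that the unconstrained optimum mixes them profitably but any $\beta$-ND contract either pays the expensive type far too much or drops too much reward, yielding a ratio approaching $2$. This small-instance construction is the part I would expect to require the most trial and error.

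\textbf{Lower bound $1.591$ for $\delta=1$.} Here $\beta = n$, so within the incentivized set \emph{any} two payments may differ by a factor up to $n$ — the $\beta$-ND constraint only bites when one agent's natural payment exceeds $n$ times another's. I would look for an instance (presumably with a modest fixed number of agents, or a parametrized family and then optimize) where the unconstrained optimum exploits exactly such an extreme payment disparity, and show the best $\beta$-ND response loses a factor of at least $1.591$; the precise constant $1.591$ strongly suggests it comes from optimizing a small algebraic expression (the shape $1.591 \approx$ root of some low-degree equation), so I would set up the instance with one or two free parameters, write $\OPT$ and $\OPT_{ND}$ as explicit rational functions of them, and maximize the ratio numerically/symbolically. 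I would expect this, together with the ``$\ge 2$'' instance above, to be routine once the right instance template is found, but finding that template — and in particular confirming that $\beta=n$ with $n\to\infty$ does not let the $\beta$-ND contract do essentially as well as unconstrained — is the delicate point, and is presumably why the theorem leaves the gap $[1.591,2]$ open for $\delta=1$.

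Finally I would assemble: the upper bound from the block-partition argument, the three lower bounds from the instances, and note that for $\delta>1$ Remark~\ref{r1} already gives $1+o(1)$, so the stated bounds hold for all $\delta\in(0,1]$ and sufficiently large $n$.
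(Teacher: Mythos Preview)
Your lower-bound sketches are on the right track and, with care, match the paper's constructions: the $\frac{1}{\delta+o(1)}$ bound does come from the Lemma~\ref{nd_lb} instance essentially as you describe (the paper additionally splits the incentivized set $S$ into an ``above-floor'' part $A$ and a ``floor'' part $B$ and bounds $f(A),f(B)$ separately, but your coarser positivity argument also yields the bound), and the $\ge 2$ and $\ge 1.591$ constructions are indeed one high-contribution agent against $n-1$ cheap agents, with parameters tuned.

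The genuine gap is in your upper-bound argument. Partitioning $S^*$ by \emph{rank} into blocks of sizes $1,\beta,\beta^2,\ldots$ does not control the ratio of $\alpha$-values inside a block: the sorted sequence $\alpha_{1,S^*}\ge\alpha_{2,S^*}\ge\cdots$ can have arbitrary gaps, so a block of $\beta$ consecutive agents can contain $\alpha$-values differing by far more than $\beta$. Then the $\beta$-ND floor $\alpha_{G_k}/\beta$ may dominate $\alpha_{i,S^*}$ for many $i\in G_k$, and the extra floor payment $|G_k|\cdot\alpha_{G_k}/\beta$ can be of the same order as $\sum_i\alpha_{i,S^*}$ (since $|G_k|/p_k\approx\beta$ where $p_k$ is the starting index, one only gets $|G_k|\alpha_{p_k,S^*}/\beta\lesssim\sum_{i\le p_k}\alpha_{i,S^*}$, so the total payment bound is $\lesssim 2\sum_i\alpha_{i,S^*}$ rather than $\sum_i\alpha_{i,S^*}$). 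This constant-factor blowup is fatal when $\sum_i\alpha_{i,S^*}$ is close to $\tfrac12$, and the ``$+1$'' group you mention does not repair it.

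The paper instead partitions by \emph{value} of $\alpha_{i,S^*}$: with $t=\lceil 1/\delta\rceil$, take $G_1=\{i:\alpha_{i,S^*}<1/n\}$ and for $j=2,\ldots,t+1$ let $G_j$ be the agents whose $\alpha_{i,S^*}$ lies in a multiplicative window of width $n^\delta$ (specifically $[1/n^{(t+2-j)\delta},1/n^{(t+1-j)\delta})$, with $G_2$ starting at $1/n$). By construction every $G_j$ with $j\ge 2$ has $\max/\min$ $\alpha$-ratio at most $n^\delta=\beta$, so the $\beta$-ND floor never binds there and the total payment is exactly $\sum_{i\in G_j}\alpha_{i,S^*}\le\sum_{i\in S^*}\alpha_{i,S^*}$. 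For $G_1$ all values are below $1/n$, so the floor costs at most $|G_1|/n^{1+\delta}=o(1)$ extra. Submodularity gives $\sum_j f(G_j)\ge f(S^*)$, and averaging over the $t+1$ groups yields one with $g_{ND}(n^\delta,G_j)\ge\frac{1}{t+1}\OPT$. The ``$+1$'' is thus not a boundary slack from rank-partitioning but the dedicated bucket $G_1$ for vanishingly small $\alpha$-values.
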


The lemma below establishes the upper bound on the price of $n^\delta$-non-discrimination. 
% Our approach to this proof is grouping agents according to their payments, such that the ratio within each group is bounded by roughly $\beta=\frac{1}{n^\delta}$. 
% Consequently, we consider $\beta$-ND contracts that incentivize each group $G_k$ separately and show that the sum of the principal’s utilities from these contracts is at least the optimal utility under the unconstrained contract.

% Different from Lemma \ref{nd_ub}, $\beta$-ND contract focuses on the exact $\alpha$-value when proving the upper bound. 
% In order to incentivize as more agents as possible at the same time, we create groups according to the $\alpha$-value.

\begin{lemma}\label{bnd_ub}
    Suppose that $f$ is submodular and $S^*$ is the set of agents exerting effort in the optimal unconstrained contract. 
    For any $\delta \in (0,1]$ and sufficiently large $n$, there exists an ND contract in which agents $S\subseteq S^*$ exert effort and
    \[
    g_{ND}(n^{\delta},S)\ge \frac{1}{\lceil\frac{1}{\delta}\rceil+1} g(S^*).
    \]
\end{lemma}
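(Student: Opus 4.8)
The plan is to re-use the grouping idea from Lemma~\ref{nd_ub}, but to exploit the factor-$\beta$ slack (with $\beta=n^{\delta}$) so that the groups may grow geometrically with ratio about $\beta$ rather than $2$; then only $m:=\lceil 1/\delta\rceil+1$ groups are needed to exhaust $S^*$. I would first assume WLOG that $S^*=\{1,\dots,s\}$ with the agents sorted so that $\alpha_{i,S^*}:=c_i/f(i\mid S^*\setminus\{i\})$ is non-increasing in $i$, and write $P^*:=\sum_{i\in S^*}\alpha_{i,S^*}$, so that $g(S^*)=(1-P^*)f(S^*)$. I would partition $S^*$ into consecutive blocks $G_1,\dots,G_m$ by rank: $G_1=\{1\}$, and for $k\ge 2$, $G_k$ consists of the agents whose rank lies in $\bigl(\lceil\beta^{k-2}\rceil,\ \lceil\beta^{k-1}\rceil\bigr]$ (intersected with $\{1,\dots,s\}$). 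Two facts about this partition do the work: (i) it covers $S^*$, since $\lceil\beta^{m-1}\rceil\ge\beta^{m-1}=n^{\delta\lceil 1/\delta\rceil}\ge n\ge s$; and (ii) for every $k$ one has $|G_k|\le \beta\,T_{k-1}$, where $T_{k-1}:=|G_1\cup\cdots\cup G_{k-1}|$ is the number of agents preceding $G_k$, because $\lceil\beta^{k-1}\rceil\le\beta^{k-1}+1\le(\beta+1)\lceil\beta^{k-2}\rceil$.

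The core step is to show that for every block the total payment of the best $\beta$-ND contract incentivizing exactly $G_k$ is at most $P^*$, i.e.\ $P_{G_k}:=\sum_{i\in G_k}\max\{c_i/f(i\mid G_k\setminus\{i\}),\ \alpha_{G_k}/\beta\}\le P^*$. I would bound $\max\{a,b\}\le a+b$ and use submodularity (since $G_k\setminus\{i\}\subseteq S^*\setminus\{i\}$, $f(i\mid G_k\setminus\{i\})\ge f(i\mid S^*\setminus\{i\})$) to get $P_{G_k}\le\sum_{i\in G_k}\alpha_{i,S^*}+\tfrac{|G_k|}{\beta}\alpha_{G_k}$. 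Then, again by submodularity, $\alpha_{G_k}\le\max_{j\in G_k}\alpha_{j,S^*}=\alpha_{\ell_k,S^*}$, where $\ell_k$ is the first rank in $G_k$; since $\ell_k-1=T_{k-1}$ and each of the $T_{k-1}$ earlier agents has $\alpha_{i,S^*}\ge\alpha_{\ell_k,S^*}$, fact (ii) gives $\tfrac{|G_k|}{\beta}\alpha_{G_k}\le T_{k-1}\,\alpha_{\ell_k,S^*}\le\sum_{i<\ell_k}\alpha_{i,S^*}$. Adding the two estimates telescopes to $P_{G_k}\le\sum_{i\le r_k}\alpha_{i,S^*}\le P^*$, where $r_k$ is the last rank in $G_k$. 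The singleton block $G_1$ is handled directly: with one agent the floor is inactive, so $P_{G_1}=\alpha_{1,G_1}\le\alpha_{1,S^*}\le P^*$.

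To finish, I would invoke subadditivity of submodular $f$ (exactly as in the proof of Lemma~\ref{nd_ub}): $\sum_{k=1}^m f(G_k)\ge f(S^*)$, so some block $G_{k^*}$ satisfies $f(G_{k^*})\ge f(S^*)/m$. Setting $S=G_{k^*}$ and combining with the payment bound, $g_{ND}(n^{\delta},S)=(1-P_S)f(S)\ge(1-P^*)\cdot f(S^*)/m=g(S^*)/m=g(S^*)/(\lceil 1/\delta\rceil+1)$, as required (and $1-P_S\ge 1-P^*\ge 0$ since $g(S^*)>0$, the case $g(S^*)=0$ being trivial).

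I expect the main obstacle to be getting the geometric schedule of group sizes exactly right so that (i) and (ii) hold simultaneously: each block must stay under the $\beta\cdot(\text{prefix size})$ threshold — which is precisely what makes the ``floor'' payments $\alpha_{G_k}/\beta$ affordable against the earlier agents — while still covering all $\le n$ agents within only $\lceil 1/\delta\rceil+1$ blocks. The ceilings make this arithmetic somewhat fiddly (this is where ``for sufficiently large $n$'' supplies room), and one must remember to treat the first, singleton block separately, since it has no predecessors against which to charge its floor payment. The remaining ingredients — $\max\{a,b\}\le a+b$, submodularity, and the sortedness of the $\alpha_{i,S^*}$ — are entirely routine.
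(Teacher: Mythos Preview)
Your proof is correct and takes a genuinely different route from the paper's. You partition $S^*$ by \emph{rank} (after sorting by $\alpha_{i,S^*}$), with block sizes growing geometrically at rate $\beta=n^\delta$; this is a direct generalization of the Lemma~\ref{nd_ub} scheme (doubling replaced by $\beta$-fold growth), and your charging of the floor payments $\alpha_{G_k}/\beta$ against the $T_{k-1}$ earlier agents via fact (ii) is exactly the analogue of the key inequality there. The paper instead partitions by \emph{value}, placing agents into $\lceil 1/\delta\rceil+1$ bins according to which geometric interval $[n^{-(t+2-j)\delta},n^{-(t+1-j)\delta})$ their $\alpha_{i,N}$ falls in; within each bin (except the bottom one $G_1=\{i:\alpha_{i,N}<1/n\}$) the $\beta$-ND floor is automatically inactive, so the payment equals $\sum_{i\in G_k}\alpha_{i,N}$ exactly, while $G_1$ incurs an extra $|G_1|/n^{1+\delta}\le n^{-\delta}=o(1)$. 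Your approach has the pleasant side effect that the bound holds for all $n$ (the ``sufficiently large $n$'' in the statement is not actually needed in your argument, whereas the paper genuinely uses it to absorb the $o(1)$ from $G_1$); the paper's approach, on the other hand, makes more transparent \emph{why} $\beta$-ND helps --- the constraint is simply never binding once values are bucketed within a factor of $\beta$ --- and avoids the crude $\max\{a,b\}\le a+b$ step.
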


\begin{proof}
    As our objective is to design an ND contract that incentivizes a subset of agents in $S^*$, assume without loss of generality that the optimal unconstrained contract incentivizes all $n$ agents as we did in Lemma \ref{nd_ub}, i.e. $S^*=N$.
    For any $i\in N$, define $\alpha_{i,N}:=\frac{c_i}{f(i \,|\, N \setminus \{i\})}$.
    By the optimality of $g(S^*)$, we have $\alpha_{i,N} < 1$ for all $i$. We now group agents by their $\alpha_{i,N}$ values.

    Let $t=\lceil \frac{1}{\delta} \rceil$ and we partition agents into $t+1$ groups $G_1,G_2,\ldots,G_{t+1}$, defined as follows:
    \[
    \begin{aligned}
        & G_1=\left\{i\in N : 0\leq \alpha_{i,N}<\frac{1}{n} \right\}, \\
        &G_2=\left\{i\in N : \frac{1}{n}\leq \alpha_{i,N}< \frac{1}{n^{(t-1)\delta}} \right\},
    \end{aligned}
    \]
    and for any $j=3,\ldots,t+1$,
    \[
    G_j=\left\{i\in N: \frac{1}{n^{(t+2-j)\delta}} \leq \alpha_{i,N}< \frac{1}{n^{(t+1-j)\delta}} \right\}.
    \]
    Note that group $G_2$ is well defined because $(t-1)\delta < 1$. Starting from the right boundary of $G_2$, repeatedly extend to the right by a factor of $\frac{1}{\delta}$ to obtain groups $G_3,\ldots,G_{t+1}$.

    We then consider $\beta$-ND contracts that incentivize each group $G_k$ separately 
    and show that the sum of the principal’s utilities from these contracts is at least the optimal utility under the unconstrained contract.
    For any agent $i$, suppose that $i\in G_k$. Define
    \[
    \hat{\alpha}_i:=\max\left\{ \frac{c_i}{f(i \,|\, G_k \setminus \{i\})}, \frac{1}{n^\delta}\max_{j \in G_k}{\frac{c_j}{f(j \,|\, G_k \setminus \{j\})}}  \right\}
    \]
    the minimum payment to $i$ under the $n^\delta$-ND contract that induces effort by exactly $G_k$.
    
    First, consider group $G_k$ with $k\geq 2$. 
    Since $f$ is submodular, then $\sum_{i\in G_k} \hat{\alpha}_i$ is at most
    \[
    \sum_{i\in G_k}
        \max\bigg\{
        \frac{c_i}{f(i \mid N \setminus \{i\})},\frac{1}{n^\delta}\max_{j \in G_k}
        \frac{c_j}{f(j \mid N \setminus \{j\})}
        \bigg\},
    \]
    which equals $\sum_{i\in G_k}\alpha_{i,N}$ as $\alpha_{i,N}\geq \frac{1}{n^\delta}\max_{j\in G_k} \alpha_{j,N}$.

    For group $G_1$, since $\hat{\alpha}_i\leq \alpha_{i,N} < \frac{1}{n}$ for all $i\in G_1$, we have
    \[
    \sum_{i\in G_1} \hat{\alpha}_i < \frac{|G_1|}{n^{(1+\delta)}} + \sum_{i\in G_1} \alpha_{i,N}.
    \]

    Since $|G_1|\le n$ and $n$ is sufficiently large, the above inequality implies $\sum_{i\in G_1} \hat{\alpha}_i < \sum_{i\in G_1} \alpha_{i,N}+o(1)$.
    Note that setting payment $\hat{\alpha}_i$ to every $i\in G_k$ and zero to others yields the optimal $n^\delta$-ND contract that induces effort exactly by $G_k$.
    Under this contract, the principal has utility
    \[
    f(G_k)\left( 1- \sum_{i \in G_k} \hat{\alpha}_i \right) = g_{ND}(n^\delta,G_k).
    \]
    We design such a contract for each group $G_k$. 
    Then the principal’s total utility from $t+1$ separate contracts is at least
    \[
    \begin{aligned}
          &\sum_{k=1}^{t+1}f(G_k)( 1-\sum_{i\in G_k}\alpha_{i,N} ) - f(G_1)\cdot o(1) \\
         \geq &f(G_1)( 1-\sum_{i\in N} \alpha_{i,N}) + \sum_{k=2}^{t+1}f(G_k)( 1-\sum_{i\in N} \alpha_{i,N}) \\
           \geq &f(N)( 1-\sum_{i\in N} \alpha_{i,N}) = \OPT,
    \end{aligned}
    \]
    where the first inequality transition is due to $G_k\subseteq N$ for all $k$ and the second inequality uses submodularity of $f$.
    By the above inequality, there exists a group $G_{k^*}$ such that the $n^\delta$-ND contract with payment $\hat{\alpha}_i$ to every $i\in G_{k^*}$ (and $0$ to all other agents) yields utility for the principal of at least $\frac{1}{t+1}\OPT=\frac{1}{\lceil \frac{1}{\delta} \rceil+1}\OPT$.
\end{proof}

We complement the upper bound with a counter example, in which the optimal $n^\delta$-ND contract achieves at most $\delta+o(1)$ fraction of the unconstrained optimum.

% Lemma \ref{bnd_ub} shows an upper bound on $\PoND(n^\delta)$. 
% We split the construction of the lower bound from two perspectives: Lemma \ref{bnd_epslb} corresponds to a bad example when $\delta$ is close to $0$, i.e. $\beta$ is away from $\frac{1}{n}$. 
% To the other end, Lemma \ref{bnd_2lb} and \ref{bnd_1.6lb} are bad examples when $\delta$ is close to $1$, i.e. $\beta$ is close to $\frac{1}{n}$.

\begin{lemma}\label{bnd_epslb}
    For any $\delta \in (0,1)$, there exists an instance where the principal's optimal utility under any $n^\delta$-ND contract is at most a $(\delta+O(\frac{\log\log n}{\log n}))$ fraction of the optimal utility under unconstrained contracts. % In other words, $\PoND(\beta)\le \frac{1}{\delta+O(\frac{\log\log n}{\log n})}$.
\end{lemma}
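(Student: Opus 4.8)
The plan is to mirror the construction in Lemma \ref{nd_lb}, but with the group sizes and rewards tuned so that the $n^\delta$-ND contract is exactly as powerful as covering about $\delta m$ of the $m \approx \log n$ groups, rather than $O(\log\log n / \log n)$ of them. Concretely, I would again take $n = 2^m - 1$ agents split into groups $G_1,\dots,G_m$ with $|G_k| = 2^{k-1}$, keep $f$ additive, and set the per-agent reward in $G_k$ to be $f(\{i\}) = \frac{1}{m 2^{k-1}}$ so that each group contributes exactly $\frac{1}{m}$ to the total reward $f(N)=1$. As before, I would choose the costs so that the unconstrained optimum incentivizes all $n$ agents with utility $1 - \frac{1}{T}$ for a constant $T \ge 2$: taking $\alpha_{i,N} = f(\{i\})/T$ works, which forces $c_i = f(\{i\})^2/(T \cdot f(i\mid N\setminus\{i\})) = f(\{i\})^2/T$ since $f$ is additive, so $c_i = \frac{1}{T m^2 4^{k-1}}$ exactly as in Lemma \ref{nd_lb}.

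The key difference is the analysis of the $n^\delta$-ND contract. Suppose the optimal $n^\delta$-ND contract incentivizes $S$, and let $L = \min\{k : G_k \cap S \neq \emptyset\}$ be the lowest-index group it touches. The minimum payment any incentivized agent can receive is the indifference payment of the cheapest-per-contribution incentivized agent, which is $\frac{1}{n^\delta}$ times the maximum indifference payment; the maximum indifference payment among agents in $S$ is $\alpha_{L,\cdot} = \Theta(\frac{1}{m 2^{L-1}})$ (attained in group $G_L$). Hence every incentivized agent receives at least $\Theta\!\left(\frac{1}{n^\delta m 2^{L-1}}\right)$. The crucial bookkeeping: because lower-index groups have strictly higher marginal contribution per unit payment, the optimal $n^\delta$-ND contract fills groups in increasing index order starting from $G_L$, so if it reaches group $G_{L+t-1}$ it must contain all of $G_L,\dots,G_{L+t-1}$, giving at least $2^{L-1}(2^t - 1)$ incentivized agents and reward exactly $\frac{t}{m}$. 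The total payment is then at least $2^{L-1}(2^t-1) \cdot \Theta\!\left(\frac{1}{n^\delta m 2^{L-1}}\right) = \Theta\!\left(\frac{2^t - 1}{n^\delta m}\right)$, and for the principal's utility to be positive we need this below $1$, i.e. $2^t - 1 = O(n^\delta m)$. Since $n = 2^m - 1$, this yields $t = O(\delta m + \log m)$, so the reward $f(S) = \frac{t}{m} \le \delta + O\!\left(\frac{\log m}{m}\right) = \delta + O\!\left(\frac{\log\log n}{\log n}\right)$. Comparing with $\OPT = 1 - \frac{1}{T} \to 1$ as $T \to \infty$ (and then letting $T$ grow slowly, or just noting the $(1-1/T)$ factor can be absorbed) gives the claimed ratio.

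The main obstacle I anticipate is getting the constants in the payment lower bound tight enough that the leading term is genuinely $\delta$ and not, say, $\delta$ times a constant larger than $1$: I need $f(S) \le \delta + o(1)$, not $f(S) \le C\delta + o(1)$. This forces care about (i) the exact ratio between the maximum and minimum indifference payments within a filled block of groups — the agents in $G_{L+t-1}$ have indifference payment $\Theta(2^{-(t-1)})$ times those in $G_L$, and I must ensure this ratio itself stays within $n^\delta$ so that the $n^\delta$-ND floor $\frac{1}{n^\delta}\alpha_S$ actually binds for the higher-index groups rather than their own (smaller) indifference payments dominating — and (ii) the precise definition of $\alpha_S$ (the max indifference payment) versus the realized max payment in the $\beta$-ND contract. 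I would handle (i) by checking $2^{t} \le n^\delta m$ already implies $t \le \delta m + \log m$, which is consistent with the ratio $2^{t-1} \le n^\delta$ up to the $\log m$ slack, so the floor binds for all but possibly the last $O(\log m)$ groups, and those contribute only $O(\frac{\log m}{m})$ reward anyway — folding neatly into the $O(\frac{\log\log n}{\log n})$ error term. A secondary check is confirming, exactly as in Lemma \ref{nd_lb}, that incentivizing all $n$ agents is indeed the unconstrained optimum under these costs, which follows from the same marginal argument since $T \ge 2$.
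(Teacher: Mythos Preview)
Your proposal is correct and uses exactly the same instance as the paper (which simply reuses the construction from Lemma~\ref{nd_lb}); the paper's analysis differs only in bookkeeping, splitting $S$ into $A$ (the first $\lfloor \delta \log n\rfloor + 1$ groups starting from $G_L$, where the $n^\delta$-floor does not bind and $f(A) \le \delta + O(1/m)$ holds directly) and a tail $B$ (where the floor binds and the Lemma~\ref{nd_lb} argument gives $f(B) = O(\tfrac{\log\log n}{\log n})$), whereas you uniformly lower-bound every payment by the floor and obtain $f(S) \le \delta + O(\tfrac{\log\log n}{\log n})$ in one step. Your concern (i) is unnecessary, since the floor is a valid lower bound on each agent's payment regardless of whether it binds, and your ``fills groups in order'' structural claim can be replaced by the weaker (and immediate) fact that any $S \subseteq G_L \cup \cdots \cup G_m$ with $f(S) = \lambda$ must have $|S| \ge 2^{L-1}(2^{\lfloor \lambda m\rfloor}-1)$.
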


\begin{figure}[t] 
\centering 
\includegraphics[width=0.7\linewidth]{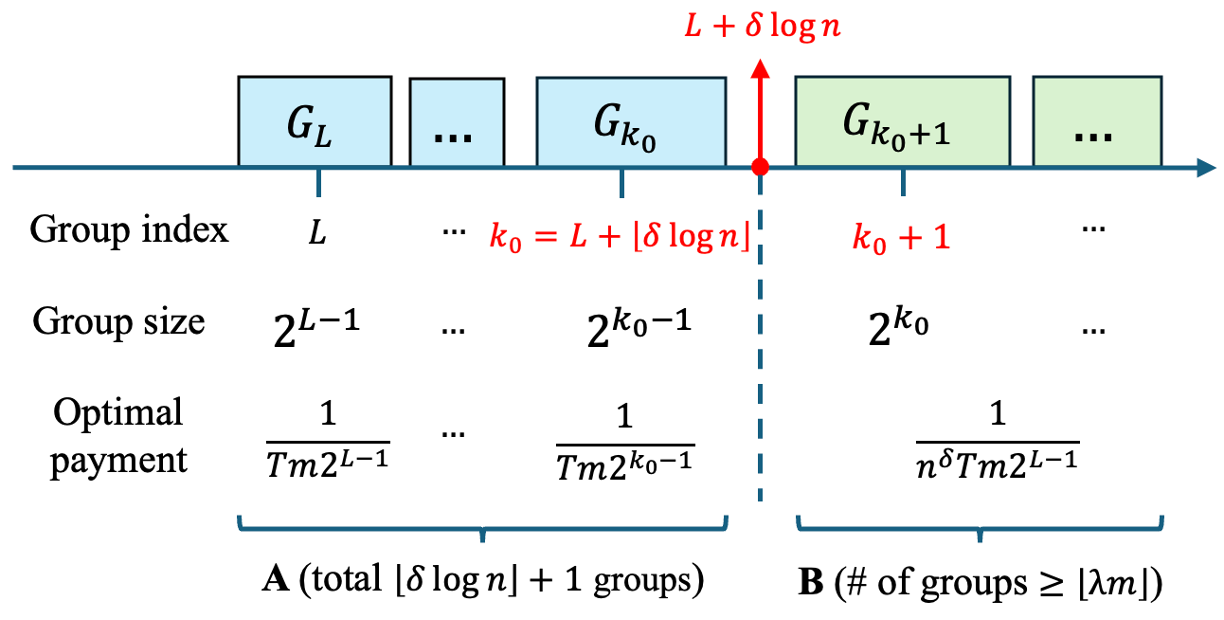} 
\caption{Illustration of our example in Lemma \ref{bnd_epslb}. The blue part ($A$) gets the same payment as in the optimal unconstrained contract, while agents in green part ($B$) receive a uniform payment, which equals the highest payment to agents in $A$ divided by ${n^\delta}$.} \label{fig:lem7} 
\end{figure}

We outline the proof of Lemma \ref{bnd_epslb} and defer its proof to Appendix \ref{a4}. To sketch, we consider the very same instance as that in Lemma \ref{nd_lb}, where the agents belonging to lower-indexed groups have higher success probability and payment threshold, and vice versa. Based on our definition of $\beta$-ND, we partition all agents into $A$ and $B$ according to a ``pivot" group $k_0$, as illustrated in Figure \ref{fig:lem7}. Agents in groups with index smaller than or equal to $k_0$ are paid the original amount as in the optimal unconstrained contract, while agents in groups with higher index are paid a uniform amount. We upper-bound $\OPT_{ND}$ by incentivizing as many agents from $A$, plus some consecutive groups from $B$.

We also derive another lower bound that improves the bound in Lemma~\ref{bnd_epslb} in some cases, e.g., when $\delta \geq \frac{1}{2}$.

\begin{lemma}\label{bnd_2lb}
    For any $\delta \in (0,1)$ and arbitrarily small $\epsilon>0$, there exists an instance where the principal's optimal utility under $n^\delta$-ND contracts is at most a $(\frac{1}{2}+\epsilon)$ fraction of that under unconstrained contracts.

\end{lemma}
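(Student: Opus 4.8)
\textbf{Proof plan for Lemma \ref{bnd_2lb}.} The goal is a lower bound of $\frac{1}{2}+\epsilon$ that holds uniformly for all $\delta\in(0,1)$, complementing Lemma \ref{bnd_epslb} (which only gives a good bound when $\delta$ is small). The natural strategy is to exhibit a small, concrete instance in which the unconstrained optimum incentivizes two ``asymmetric'' agents (or two groups of agents), but any $\beta$-ND contract—because it must pay the cheap agent almost as much as the expensive one—is forced to either drop the expensive agent entirely or swallow a payment blow-up, and in either case loses (essentially) half the utility.

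\textbf{Step 1: construct the instance.} I would take $n$ agents but make only two of them ``active'': agent $1$ with a high marginal contribution and a correspondingly high required payment threshold $\alpha_{1,N}\approx 1$ (say $\alpha_{1,N}=\frac12-\eta$ so that the unconstrained optimum with both agents is positive), and agent $2$ with a tiny marginal contribution and a tiny threshold $\alpha_{2,N}$, chosen so that $\alpha_{2,N}\ll \alpha_{1,N}/n^\delta$; the remaining $n-2$ agents are ``dummies'' with zero marginal contribution (or contribution so small and cost so large that incentivizing them is never worthwhile), present only to make $n$ large relative to $\beta=n^\delta$. Calibrate $f(\{1\})$, $f(\{2\})$, $f(\{1,2\})$ (additive is simplest) and the costs $c_1,c_2$ so that $g(N)=g(\{1,2\})$ is the unique unconstrained optimum, with value roughly $f(\{1,2\})(1-\alpha_{1,N}-\alpha_{2,N})$, while $g(\{1\})\approx f(\{1\})(1-\alpha_{1,N})$ is about half of that (this is the same ``two-agent tightness'' phenomenon behind Theorem \ref{bnd:2a}, pushed to the regime $\beta+1$ large). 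This step is essentially bookkeeping: pick the two reward values so that splitting off agent $1$ alone already costs a factor close to $2$.

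\textbf{Step 2: bound $\OPT_{ND}$.} Enumerate the candidate effort sets $S$ for a $\beta$-ND contract. If $S=\{1\}$, utility is $g_{ND}(\beta,\{1\})=g(\{1\})$, which by construction is $(\frac12+O(\eta))\,\OPT$. If $S\supseteq\{1,2\}$ (or $\{2\}\subseteq S$ together with the expensive agent), then the $\beta$-ND constraint forces the payment to agent $2$ up to $\alpha_{1,N}/n^\delta$, and more importantly forces \emph{every} agent in $S$ to be paid at least $\alpha_{1,N}/n^\delta$; since $f$ here is essentially supported on $\{1,2\}$, the extra mass paid to agent $2$ drives the total payment up enough that $(1-\sum_{i\in S}\hat\alpha_i)f(S)$ is no larger than $g(\{1\})$ up to lower-order terms—here I would just verify $\hat\alpha_2=\alpha_{1,N}/n^\delta$ and check the arithmetic. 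If $S=\{2\}$ or $S$ consists only of cheap/dummy agents, the reward $f(S)$ is tiny and the utility is negligible. Taking the max over all cases gives $\OPT_{ND}\le (\tfrac12+\epsilon)\OPT$ for $n$ large and $\eta$ small.

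\textbf{Main obstacle.} The delicate point is making the bound hold \emph{for every} $\delta\in(0,1)$ simultaneously with a single instance family, rather than tuning the instance to $\delta$. The issue is that $\beta=n^\delta$ ranges from just above $1$ (small $\delta$, strong ND constraint) to nearly $n$ (large $\delta$, weak constraint): when $\delta$ is close to $1$ the $\beta$-ND contract can pay agent $2$ almost nothing extra, so one must ensure agent $2$'s \emph{own} required threshold $\alpha_{2,N}$ is already so small that even $\alpha_{1,N}/n^\delta$ (with $\delta$ near $1$) is a non-negligible forced overpayment relative to $f(\{1,2\})-f(\{1\})$. Concretely this means choosing the gap $f(\{2\})=f(\{1,2\})-f(\{1\})$ to be comparable to $1/n$ (so that even a $\Theta(1/n^{1-o(1)})$ forced payment eats a constant fraction of agent $2$'s contribution) while keeping $\alpha_{2,N}=c_2/f(\{2\})$ genuinely tiny, e.g. $c_2=\Theta(1/n^{2})$. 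Balancing these—small enough contribution from agent $2$ that dropping him barely hurts the reward, yet forced ND overpayment to him (for all $\delta<1$) large enough to negate the benefit of keeping him—is the crux; everything else reduces to the same case analysis as in Lemma \ref{bnd_epslb} and the two-agent computation.
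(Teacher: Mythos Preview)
Your construction has a genuine gap. You build an instance with two ``active'' agents and $n-2$ dummies of zero marginal contribution, but the dummies never enter the $\beta$-ND constraint: since incentivizing them is never worthwhile, every candidate effort set $S$ is a subset of $\{1,2\}$, and the problem collapses to a pure two-agent instance with $\beta=n^\delta$. By Theorem~\ref{bnd:2a}, the two-agent $\PoND(\beta)$ equals $1+\tfrac{1}{\sqrt{\beta+1}}$, which tends to $1$ as $n\to\infty$, not to $2$. Concretely, in your Step~2 the case $S=\{1,2\}$ does \emph{not} lose a factor of two: the forced overpayment to agent~$2$ is a single term $\alpha_{1,N}/n^\delta=o(1)$, so
\[
g_{ND}(n^\delta,\{1,2\})=f(\{1,2\})\bigl(1-\alpha_{1,N}-\alpha_{1,N}/n^\delta\bigr)\to f(\{1,2\})(1-\alpha_{1,N})\approx\OPT.
\]
Your ``obstacle'' paragraph tries to fix this by shrinking $f(\{2\})$ to order $1/n$, but that is self-defeating: once $f(\{2\})=\Theta(1/n)$, dropping agent~$2$ costs the principal only $O(1/n)$ of reward, so $g(\{1\})\approx g(\{1,2\})=\OPT$ and incentivizing agent~$1$ alone already achieves ratio $1$. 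You cannot simultaneously make agent~$2$ account for half of $\OPT$ and make a single $\Theta(n^{-\delta})$ overpayment to him matter.

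The paper's proof resolves this by replacing your one cheap agent with $n-1$ identical cheap agents, each contributing $\tfrac{1}{2(n-1)}$ so that $f(B)=\tfrac12$. Now reaching full reward requires incentivizing \emph{all} of $B$, and under the $n^\delta$-ND constraint each $b$ is forced to receive at least $\alpha_a/n^\delta$; the \emph{total} forced overpayment is $(n-1)\cdot\alpha_a/n^\delta=\Theta(n^{1-\delta})\to\infty$ for $\delta<1$, which kills any contract containing $a$ and many $b$'s. Incentivizing $a$ alone or $B$ alone gives reward $\tfrac12$, producing the factor-$2$ gap. The missing idea in your plan is precisely this spreading of the cheap contribution across $\Theta(n)$ agents so that the per-agent overpayment, though individually $o(1)$, aggregates to something unbounded.
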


% This lemma yields an $(2-\xi)$ lower bound for the $\PoND(\beta)$ with arbitrarily small $\xi$.
The details of the construction are given in Appendix \ref{a4}. The idea is as follows: the upper bound shown in Lemma \ref{bnd_ub} requires payment disparities. 
We exploit such disparities by creating an agent with high contribution and payment ($A$) and a group of ``free labors" with very low contribution and payment ($B$). 
In order to incentivize the powerful agent in $A$, the principal is forced to exclude all the agents in $B$.

To conclude, we provide the lower bound for $\delta$ = 1. 
\begin{lemma}\label{bnd_1.6lb}
    There exists an instance where the principal's optimal utility under $n$-ND contracts is no greater than $0.629$ fraction of that under unconstrained contracts.
\end{lemma}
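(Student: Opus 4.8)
Here is how I would prove Lemma~\ref{bnd_1.6lb}.

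\medskip

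The plan is to construct an additive (hence monotone submodular) instance that pits one ``expensive'' agent against a crowd of ``cheap'' ones and to tune two real parameters so that $\OPT/\OPT_{ND} > 1/0.629 \approx 1.591$ as $n\to\infty$. Fix constants $p\in(0,\tfrac12)$ and $a\in(0,1)$, set $b:=1-a$, and let $\varepsilon=\varepsilon(n)>0$ with $\varepsilon=o(1/n)$. Let $f$ be additive with $f(N)=1$. Agent~$1$ (``expensive'') has $f(\{1\})=a$ and cost $c_1=pa$, so $c_1/f(1\mid N\setminus\{1\})=p$; each of agents $2,\dots,n$ (``cheap'') has marginal reward $\tfrac{b}{n-1}$ and cost $\tfrac{\varepsilon b}{n-1}$, so each has indifference ratio $\varepsilon$, and the cheap agents contribute total reward $b$.

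The first step is to compute $\OPT$. The contract incentivizing all $n$ agents pays agent~$1$ exactly $p$ and pays the cheap agents $(n-1)\varepsilon=o(1)$ in total, giving utility $(1-p-o(1))\cdot 1=(1-p)-o(1)$; since $a>p$, this beats incentivizing only the cheap agents (utility $b-o(1)=(1-a)-o(1)$), beats incentivizing agent~$1$ alone (utility $(1-p)a$), and adding cheap agents to any set containing agent~$1$ only helps (a marginal cheap agent adds $\Theta(1/n)$ reward and $o(1/n)$ payment). Hence $\OPT=(1-p)-o(1)$.

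The crux is to upper bound $\OPT_{ND}$ for $\beta=n$. By symmetry among the cheap agents, an incentivized set is, up to relabeling, either (a) a set of $m\le n-1$ cheap agents, where all payments are equal so the ND constraint is vacuous and the best is $m=n-1$ with utility $b-o(1)$; or (b) agent~$1$ together with $m=\theta(n-1)$ cheap agents, where the cheapest valid $n$-ND contract pays agent~$1$ the amount $p$ and pays each cheap agent $\max\{\varepsilon,\,p/n\}=p/n$, so the total payment is $p(1+\theta)+o(1)$, the reward is $a+\theta b$, and the utility is $(1-p(1+\theta))(a+\theta b)+o(1)$. Maximizing the concave quadratic $\theta\mapsto(1-p(1+\theta))(a+\theta b)$ over $\theta\in[0,1]$ gives a value $\Psi(p,a)$, attained at an endpoint or at the interior point $\theta^\star=\tfrac{b-p}{2pb}$; therefore $\OPT_{ND}\le\max\{\,b,\ \Psi(p,a)\,\}+o(1)$.

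Finally, I would choose $p$ and $a$ to make $\dfrac{1-p}{\max\{b,\Psi(p,a)\}}$ exceed $1/0.629$; for instance $a=2p$ with $2p^2-4p+1=0$, i.e.\ $p=1-\tfrac1{\sqrt2}$ (so that $\theta^\star=\tfrac12$), makes the two ND candidates nearly balanced and yields a ratio of about $1.591$, which gives $\OPT_{ND}\le 0.629\,\OPT$ for all large $n$ (a numerically slightly different choice such as $p\approx\tfrac13$ gives an even larger ratio, which is harmless). I expect the main obstacle to be case~(b): one must see that under $\beta=n$ the principal can cheaply bring a \emph{constant fraction} of the cheap agents alongside the expensive one (the interior optimum $\theta^\star$), which pushes $\OPT_{ND}$ above the naive ``expensive alone versus cheap alone'' value and is exactly what pins down the admissible choices of $p$ and $a$.
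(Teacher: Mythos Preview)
Your proposal is correct and takes essentially the same approach as the paper: one expensive agent versus $n-1$ identical cheap agents with additive $f$, the observation that under $\beta=n$ the principal optimally brings in a constant fraction $\theta^\star=\tfrac12$ of the cheap agents alongside the expensive one, and the same parameter choice $p=1-\tfrac1{\sqrt2}$, $a=2p$ (the paper fixes $f(\{a\})=\tfrac{\sqrt2}{4}$, $c_a=\tfrac{\sqrt2-1}{4}$, which after normalizing $f(N)=1$ is exactly your $a=2-\sqrt2$, $p=1-\tfrac1{\sqrt2}$). The only cosmetic difference is that you parametrize by $(p,a)$ and then specialize, while the paper writes down the final numbers directly; the resulting ratio $\tfrac{\OPT_{ND}}{\OPT}\to\tfrac{11-6\sqrt2}{4}\approx0.629$ is identical.
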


Basically, the construction is analogous to that of Lemma \ref{bnd_2lb}, but we can additionally incentivize a portion of agents in $B$. We defer the discussion to Appendix \ref{a4}.

\section{Tight Results for Two Agents}\label{sec:2a}
Our results in the previous section are grounded in large $n$. 
In this section, we shift our focus to the case of two agents, in which we are able to identify the optimal $\beta$-ND contract, and establish tight $\PoND(\beta)$ for all $\beta\ge 1$.
% a special case where the number of agent is $2$. 
% Since the set $S$ to incentivize is not hard to numerate, we characterize the tight $\PoND(\beta)$ for all $\beta\ge 1$ by algebraically proving an inequality. Due to the page limit, we defer the proof to Appendix \ref{a5}.

\begin{theorem}\label{bnd:2a}
    For any $\beta \geq 1$, when there are two agents and the reward function $f$ is submodular, the $\PoND(\beta)$ equals $(1+\frac{1}{\sqrt{\beta+1}})$, and this bound is tight.
\end{theorem}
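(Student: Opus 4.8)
The plan is to work with an arbitrary two-agent instance specified by success probabilities and costs, write down the three candidate sets $S \in \{\{1\},\{2\},\{1,2\}\}$, and compare the best unconstrained utility against the best $\beta$-ND utility. For singleton sets the ND constraint is vacuous, so the only source of loss is the set $S=\{1,2\}$. Thus the worst case must involve the optimal unconstrained contract incentivizing both agents with payments $\alpha_1 = c_1/f(1\mid\{2\})$ and $\alpha_2 = c_2/f(2\mid\{1\})$ that are far apart, say $\alpha_1 \le \alpha_2$ with ratio exceeding $\beta$. Under a $\beta$-ND contract incentivizing both, the small payment $\alpha_1$ must be raised to $\alpha_2/\beta$, so $\OPT_{ND}$ on this set is $(1 - \alpha_2/\beta - \alpha_2) f(\{1,2\})$ (after relabeling so agent $2$ carries the binding payment), whereas on a singleton it is whatever $g(\{1\})$ or $g(\{2\})$ gives. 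The key reduction is: the adversary wants $g(\{1,2\})$ large while forcing both singleton utilities \emph{and} the $\beta$-ND two-agent utility to be small.

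First I would parametrize cleanly. Using submodularity, $f(1\mid\{2\}) \le f(\{1\})$ and $f(2\mid\{1\}) \le f(\{2\})$, and $f(\{1,2\}) \le f(\{1\}) + f(\{2\})$; monotonicity gives the obvious inequalities. It is convenient to normalize and let $a = f(\{1\})$, $b = f(\{2\})$, $F = f(\{1,2\})$, and express the marginal contributions and costs through the payment variables. Then $\OPT \ge \max\{g(\{1\}), g(\{2\}), g(\{1,2\})\}$ and $\OPT_{ND}(\beta) = \max\{g(\{1\}), g(\{2\}), g_{ND}(\beta,\{1,2\})\}$, so $\PoND(\beta) = \sup \frac{g(\{1,2\})}{\max\{g(\{1\}),g(\{2\}),g_{ND}(\beta,\{1,2\})\}}$ over instances where $g(\{1,2\})$ is the unconstrained maximizer. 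The upper bound direction is then an optimization: show that for every feasible choice of parameters, $g(\{1,2\}) \le (1 + \tfrac{1}{\sqrt{\beta+1}}) \cdot \max\{g(\{1\}), g(\{2\}), g_{ND}(\beta,\{1,2\})\}$. I expect the binding comparison to be against one singleton (the one containing the high-payment agent, whose own marginal contribution alone is already a decent fraction of $F$) traded off against $g_{ND}(\beta,\{1,2\})$; the constant $1 + \frac{1}{\sqrt{\beta+1}}$ and the appearance of $\sqrt{\beta+1}$ strongly suggest the extremal instance balances these two quantities, with the $\sqrt{\cdot}$ emerging from an AM–GM / quadratic optimization when one sets the two lower bounds equal and maximizes the numerator.

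For the matching lower bound I would construct an explicit instance: take $f(\{1\}) = f(\{2\}) = f(\{1,2\})$ forced by additive-style saturation (or a carefully chosen submodular $f$ where adding the second agent contributes little), give agent $1$ a tiny cost so its unconstrained payment $\alpha_1$ is negligible, and tune agent $2$'s cost so that $\alpha_2$ is exactly at the value where $g(\{1,2\})$, $g(\{2\})$, and $g_{ND}(\beta,\{1,2\})$ are all in the ratio predicted by the calculus optimum; solving for the critical $\alpha_2$ should produce the $1/\sqrt{\beta+1}$ term. The main obstacle I anticipate is the case analysis on the upper bound: one must rule out that some other configuration (e.g.\ the unconstrained optimum being a singleton, or $\alpha_1/\alpha_2$ within $\beta$ so no loss occurs, or one agent's marginal contribution collapsing to near zero) gives a worse ratio, and one must handle the max of three terms in the denominator carefully rather than fixing in advance which term binds. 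Concretely, the hard step is showing that the worst case indeed has $f(\{1\})$, $f(\{2\})$, $f(\{1,2\})$ all essentially equal and $\alpha_1 \to 0$, since a priori the adversary has a multi-parameter family; I would argue this by a sequence of monotonicity/rescaling reductions (pushing $\alpha_1$ down and the three $f$-values together can only increase the ratio) before finishing with the one-variable optimization in $\alpha_2$.
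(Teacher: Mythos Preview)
Your setup is right and matches the paper: reduce to instances where the unconstrained optimum incentivizes both agents with $\alpha_1<\alpha_2/\beta$, then bound $g(\{1,2\})$ against $\max\{g(\{1\}),g(\{2\}),g_{ND}(\beta,\{1,2\})\}$. The gap is in your conjectured extremal structure. Pushing the three $f$-values together so that $f(\{1\})=f(\{2\})=f(\{1,2\})$ drives both marginals $f(1\mid\{2\})$ and $f(2\mid\{1\})$ to zero, making $g(\{1,2\})$ vanish (the payments $\alpha_i=c_i/f(i\mid\{j\})$ blow up), so that reduction moves you away from the worst case, not toward it. The paper's tight instance is \emph{additive} with the two singleton rewards differing by exactly the factor $\sqrt{\beta+1}$: after relabeling so $\alpha_1\le\alpha_2$, one has $f(\{1\})=\tfrac{1}{2\sqrt{\beta+1}}$, $f(\{2\})=\tfrac12$, $\alpha_1\to 0$, and $\alpha_2=1-\tfrac{1}{\sqrt{\beta+1}}$.

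Relatedly, you expect only the high-payment singleton to bind, but in fact \emph{both} singletons are tight at the extremum, and the paper's upper-bound argument uses both simultaneously: writing $u=1-\alpha_1$, $v=1-\alpha_2$, $x=f(\{1\})/(f(\{1\})+f(\{2\}))$, one assumes for contradiction that $xu$, $(1-x)v$, and $1-\tfrac{\beta+1}{\beta}(1-v)$ are all below $\tfrac{\sqrt{\beta+1}}{\sqrt{\beta+1}+1}(u+v-1)$, divides the first by $u$ and the second by $v$, and adds to obtain $\tfrac{(u+v-1)(u+v)}{uv}>1+\tfrac{1}{\sqrt{\beta+1}}$. The third inequality then forces $v\le\tfrac{1}{\sqrt{\beta+1}}$, and monotonicity of the left side in $u,v$ on $(0,1]$ gives the contradiction at $u=1$, $v=\tfrac{1}{\sqrt{\beta+1}}$. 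So the $\sqrt{\beta+1}$ does arise from balancing constraints, but the balance is among all three denominators, and the free parameter is the ratio $f(\{1\})/f(\{2\})$ (encoded in $x$), not a collapse of the $f$-values.
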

We defer the discussion of this theorem to Appendix \ref{a5}. By enumerating all the ways to incentivize agent(s), 
the $\PoND(\beta)$ is characterized by algebraically proving an inequality. 
% We remark that the above theorem also holds when $f$ is subadditive. 

\section{Conclusion}

This paper investigates how varying degrees of non-discrimination influence the optimal revenue in multi-agent contracts. 
We provide asymptotic bounds on the price of $\beta$-non-discrimination, characterizing the tradeoff between non-discrimination and revenue. 
Our bounds are nearly tight for both the strict and relaxed cases when $\beta$ is either small or large. 
However, determining the exact bound when $\beta$ approaches $n$ remains an open question for future research.

Our results uncover several promising directions for further study. 
For example, it would be valuable to extend our framework to other classes of reward functions, such as fractionally subadditive (XOS) and subadditive functions. 
It is also interesting to investigate alternative objectives, such as social welfare, social surplus, and various fairness criteria.

% the price of $\beta$-non-discrimination in multi-agent contracts.
% %and its relaxation in multi-agent contracts. 
% We derive a logarithmic bound of the strict non-discrimination setting and a constant-factor bound of its relaxed variant, both of which are almost tight. 
% Our results characterize the utility losses under different $\beta$ (i.e., wage ratios) and will be instructive for the principal in regulating the payment scheme within organizations.

% Following our work, several open questions remain:
% \begin{itemize}
%     \item Closing the remaining gap on the price of non-discrimination and its relaxation. 
%     \item Extending our analysis beyond submodular reward functions, i.e. XOS, subadditive.
%     \item Considering the setting where the objective becomes social welfare maximization rather than revenue maximization.
% \end{itemize}

\newpage

\bibliographystyle{plain}
\bibliography{ref}

\newpage
\appendix

\section*{Appendix}

\section{Missing Proofs from Section 4}\label{a4}

\noindent\textbf{Lemma 7.} \textit{For any $\delta \in (0,1)$, there exists an instance where the principal's optimal utility under any $n^\delta$-ND contract is at most $(\delta+O(\frac{\log\log n}{\log n}))$ fraction of the optimal utility under unconstrained contracts.} 
\begin{proof}
    We consider the same instance as in the proof of Lemma~\ref{nd_lb}.
    Recall that $n=2^m-1$ agents are partitioned into $m$ groups $G_1,\ldots,G_m$.
    For any $G_k$ and $i\in G_k$, $c_i=\frac{1}{Tm^24^{k-1}}$ and $f(\{i\})=\frac{1}{m2^{k-1}}$, where $T\ge 3$ is a constant.
    To incentivize agent $i$, payment at least $\alpha_i=\frac{1}{Tm2^{k-1}}$ is required.
    For each $G_k$, $f(G_k)=\frac{1}{m}$. The optimal unconstrained contract incentivizes all agents and achieves $\OPT = 1-\frac{1}{T}$.
    % It indicates that each of the $m=\lceil\log n\rceil$ groups contributes $\frac{1}{m}$ to the total reward, and the optimal contracts in the whole group summed up to $\frac{1}{Tm}$. Moreover, $\OPT = 1-\frac{1}{T}$.

    We now consider $n^{\delta}$-ND contracts and suppose that the optimal one induces effort by a set $S$ of agents.
    Define $L:=\min\{k:G_k\cap S \neq \emptyset\}$.
    For any $i\in S$ (suppose $i\in G_k$), define
    \[
    \hat{\alpha}_i= \max\left\{ \frac{1}{Tm2^{k-1}},\frac{1}{n^\delta Tm 2^{L-1}}\right\}.
    \]
    The optimal $n^{\delta}$-ND contract pays $\hat{\alpha}_i$ to every $i\in S$ and zero to all other agents.
    Define $k_0:=L+\lfloor\delta \log n \rfloor$, $A:=(G_L\cup \cdots \cup G_{k_0})\cap S$, and $B:=S\setminus A$.
    It is not hard to verify that $\hat{\alpha}_i=\alpha_i$ for all $i\in A$ and $\hat{\alpha}_i = \frac{1}{n^\delta Tm 2^{L-1}}$ for all $i\in B$.
    Figure~\ref{fig:lem7} illustrates the structure of $S$.

    We next present the upper bound of the principal's utility under the optimal $n^\delta$-ND contract.
    Since $f$ is additive,  the optimal $n^\delta$-ND contract gives utility
    \[
    \left(f(A)+f(B) \right)(1-\sum_{i\in A }\hat{\alpha}_i - \sum_{i\in B}\hat{\alpha}_i),
    \]
    which is at most
    \[
    f(A)\left( 1-\sum_{i\in A}\alpha_i\right) + f(B).
    \]

    Viewing $f(A)\left(1-\sum_{i\in A}\alpha_i\right)$ as function of $A$ and using the analogous strategy to Lemma \ref{nd_lb}, the maximum of the function is attained at $A=G_L\cup \cdots \cup G_{k_0}$, and the maximum value is less than 
    \[
    f(G_L\cup\cdots \cup G_{k_0})= \frac{\lfloor \delta \log n \rfloor + 1}{m}.
    \]
    
    We next establish the upper bound of $f(B)$ (given that $B\subseteq S$) using arguments analogous to the proof of Lemma~\ref{nd_lb}.
    In the optimal $\beta$-ND contract, agents in $B$ receive the same payment.
    The principal therefore prioritizes groups with lower indices because they contribute higher success probabilities. As a consequence, $B$ consists of consecutive groups starting from $G_{k_0+1}$, with the last group possibly partial.

    Assume $f(B)=\lambda$, and we will bound $\lambda$ from the above. Since each entire group in $B$ contributes success probability $\frac{1}{m}$, $B$ contains at least $t=\lfloor \lambda m \rfloor$ consecutive entire groups.
    For each $G_k\subseteq S$ (hence $k>k_0$), the total payment $\sum_{i\in G_k} \hat{\alpha}_i$ equals
    \[
    \frac{2^{k-1}}{Tmn^\delta 2^{L-1}} = \frac{2^{k-L}}{Tmn^\delta}\geq \frac{2^{k-k_0-1+\delta\log n}}{Tmn^\delta}=\frac{2^{k-k_0-1}}{Tm},
    \]
    where the inequality transition is due to $L\leq k_0+1-\delta\log n$.
    Summing the inequality over all $t$ groups in $B$ yields
    \begin{align*}
        \begin{split}
            \sum_{i \in B}\hat{\alpha}_i\ge \sum_{k=k_0+1}^{k_0+t}\frac{2^{k-k_0-1}}{Tm}=\frac{2^{t}-1}{Tm}=\frac{2^{\lfloor \lambda m\rfloor}-1}{Tm}.
        \end{split}
    \end{align*}
    Since $B \subseteq S$, we must have $\sum_{i \in B}\hat{\alpha}_i < 1$; otherwise, the $n^{\delta}$-ND contract inducing effort by $S$ yields zero utility, whereas incentivizing a single group yields positive utility.
    Combining the above inequality and $\sum_{i \in B}\hat{\alpha}_i < 1$ gives $\lambda=O(\frac{\log m}{m})=O(\frac{\log \log n}{\log n})$.

    Since $m=\log (n+1)$, the principal's optimal utility under $n^\delta$-ND contracts is at most
    \[
    \frac{\lfloor \delta \log n \rfloor + 1}{m}+\lambda \leq \delta+O(\frac{\log \log n}{\log n}).
    \]
    By letting $T$ become large, the above upper bound with $\OPT=1-\frac{1}{T}$ gives $\PoND(n^\delta)\ge \frac{1}{\delta+O(\frac{\log \log n}{\log n})}$. 
\end{proof}

\noindent\textbf{Lemma 8.} \textit{For any $\delta \in (0,1)$ and arbitrarily small $\epsilon>0$, there exists an instance in which the principal's optimal utility under $n^\delta$-ND contracts is at most a $(\frac{1}{2}+\epsilon)$ fraction of the optimal utility under unconstrained contracts.}
\begin{proof}
    Let $M > 3+\frac{1}{\epsilon}$ and $n \gg M^{\frac{1}{1-\delta}}$.
    Consider an instance with additive $f$ and $n$ agents.
    Agents are classified into two groups, namely $A$ and $B$. 
    Group $A$ has one agent, denoted as $a$, with $f(\{a\})=\frac{1}{2}$ and $c_a=\frac{1}{2M}$.
    Group $B$ has $n-1$ identical agents $b,\ldots,b$ with $f(\{b\})=\frac{1}{2(n-1)}$ and $c_b=\frac{\epsilon}{2(n-1)^2}$.
    Without constraints, agent $a$ requires payment at least $\frac{1}{M}$ and each agent $b$ requires at least $\frac{\epsilon}{n-1}$ to exert effort.
    With arbitrarily small $\epsilon$, it is not hard to see that in the optimal unconstrained contract, the principal incentivizes all agents to achieve utility $\OPT=1-\frac{1}{M}-\epsilon$.

    We now consider $n^\delta$-ND contracts. If the contract incentivizes only agents in group $B$ (reward zero for $a$ and reward $\frac{\epsilon}{2(n-1)}$ for each $b$), the principal has utility $\frac{1-\epsilon}{2}$.
    If the $n^\delta$-ND contract incentivizes $a$ plus a number $t\geq 0$ of agents $b$,
    then the optimal way is to set payment $\frac{1}{M}$ to $a$ and $\frac{1}{n^\delta M}$ to each incentivized $b$, as $\frac{1}{n^\delta M} > \frac{\epsilon}{(n-1)}$ ($M \ll n^{1-\delta}$ and $\epsilon<1$).
    This contract yields utility
    \[
    \left( \frac{1}{2}+\frac{t}{2(n-1)} \right) \left( 1- \frac{1}{M} - \frac{t}{n^\delta M} \right) \triangleq h(t).
    \]
    By taking derivatives, the function $h(t)$ is strictly decreasing in $t$. Thus, the principal’s optimal utility under $n^\delta$-ND contracts is $\frac{1}{2}-\frac{1}{M}$, achieved by incentivizing agent $a$ only.
    Therefore, the ratio between the optimal utility under $n^\delta$-ND and unconstrained contracts is
    \[
    \frac{\OPT_{ND}}{\OPT}=\frac{\frac{1}{2}-\frac{1}{M}}{1-\frac{1}{M}-\epsilon} < \frac{1}{2}+\epsilon, \text{ as } M>3+\frac{1}{\epsilon},
    \]
    which completes the proof.
\end{proof}

% This lemma yields an $(2-\xi)$ lower bound for the $\PoND(\beta)$ with arbitrarily small $\xi$.
% In the end of this section, we give a lower bound when $\delta$ is exactly $1$. 
% The construction is analogous while the $h(t)$ defined above is no longer monotone, meaning that we can incentivize a portion of agents in $B$. 

\noindent\textbf{Lemma 9.} \textit{There exists an instance where the principal's optimal utility under $n$-ND contracts is no greater than $0.629$ fraction of the optimal utility under unconstrained contracts.}

\begin{proof}
    Let $n$ be an even number and $n \gg 1$. Consider an instance with additive $f$ and $n$ agents.
    Agents are classified into groups $A$ and $B$, where $A$ contains one agent denoted by $a$, and group $B$ contains $n-1$ identical agents $b,\ldots,b$.
    Let $f(\{a\})=\frac{\sqrt{2}}{4}$ and $c_a=\frac{\sqrt{2}-1}{4}$, and hence, agent $a$ should be paid at least $1-\frac{1}{\sqrt{2}}$ to exert effort.
    For each $b$, let $f(\{b\})=\frac{1}{4(n-1)}$ and $c_b=\frac{\epsilon}{4(n-1)^2}$, where $\epsilon>0$ is arbitrarily small. 
    Thus, to incentivize each $b$, payment at least $\frac{\epsilon}{n-1}$ is required.
    It is not hard to see that the optimal unconstrained contract is to incentivize all of the $n$ agents with payment $1-\frac{1}{\sqrt{2}}$ to $a$ and $\frac{\epsilon}{n-1}$ for each $b$, and gives utility $\OPT=(\frac{\sqrt{2}+1}{4})(\frac{1}{\sqrt{2}}-\epsilon)$ to the principal.

    For an $n$-ND contract, if it incentives only (all) agents in $B$, the principal's utility is $\frac{1-\epsilon}{4}$. 
    If the $n$-ND contract tends to incentivize $a$ plus a number $t\geq 0$ of agents $b$, then the optimal way is to pay $1-\frac{1}{\sqrt{a}}$ to $a$ and $\frac{\sqrt{2}-1}{n\sqrt{2}}$ to each chosen $b$, yielding a utility of
    \[
    \left(\frac{\sqrt{2}}{4}+\frac{t}{4(n-1)}\right)\left(\frac{1}{\sqrt{2}}-\frac{t(\sqrt{2}-1)}{n\sqrt{2}}\right)\triangleq h(t).
    \]
    By taking derivatives, one can verify that $h(t)$ is maximized at $t=\frac{n}{2}$, assuming only integers are considered. 
    Then the $n$-ND contract that incentivizes $a$ and half of $b$ gives the optimal utility
    \[
    \OPT_{ND}= \frac{10-\sqrt{2}}{32} + \frac{1}{8n-8}.
    \]
    Since $n\gg 1$, we have
    \[
    \frac{\OPT_{ND}}{\OPT} \approx  \frac{11-6\sqrt{2}}{4}\approx 0.629,
    \]
    completing the proof.
\end{proof}

\section{Missing Proofs from Section 5}\label{a5}
\textbf{Theorem 3.}\textit{ When there are two agents and $f$ is submodular, the $\PoND(\beta)$ equals $(1+\frac{1}{\sqrt{\beta+1}})$ for any $\beta \geq 1$, and this bound is tight.}

\begin{proof}
    In this proof, for simplicity, we write $f(1)$, $f(2)$ and $f(1,2)$ as $f(\{1\})$, $f(\{2\})$ and $f(\{1,2\})$ respectively. 
    The costs of agents 1 and 2 are $c_1$ and $c_2$.
    For the instance where the optimal unconstrained contract incentivizes only one agent, such a contract is also $\beta$-ND.
    Hence, it suffices to consider only the instances where the optimal unconstrained contract incentivizes both agents.

    Define $\alpha'_1:=\frac{c_1}{f(1 \,| \,\{2\})}$ and $\alpha'_2:=\frac{c_2}{f(2 \,| \,\{1\})}$, the reward for agent 1 and 2 in the optimal unconstrained contract.
    Without loss of generality, assume $0\leq \alpha'_1\leq \alpha'_2<1$.
    If $\alpha'_1\geq \frac{\alpha'_2}{\beta}$, the optimal unconstrained contract satisfies the $\beta$-ND requirement.
    Thus, we focus on case where $\alpha'_1<\frac{\alpha'_2}{\beta}$.

    We consider the three options of the $\beta$-ND contract: incentivizing only $a$; incentivizing only $b$; or incentivizing both agents.
    We will show that at least one of the options achieve the desired ratio, i.e.,
    \begin{align*}
        &\max\bigg\{ f(1)(1-\frac{c_1}{f(1)}), f(2)(1-\frac{c_2}{f(2)}), \\
        &\qquad \quad f(1,2)(1-\frac{\alpha'_2}{\beta}-\alpha'_2)\bigg\} \\
        \ge &\frac{\sqrt{\beta+1}}{\sqrt{\beta+1}+1}f(1,2)(1-\alpha'_1-\alpha'_2).    
    \end{align*}

    Define $x:=\frac{f(1)}{f(1)+f(2)}$. Since $f$ is submodular, $x \leq \frac{f(1)}{f(\{1,2\})}$ and $1-x\leq \frac{f(2)}{f(\{1,2\})}$.
    Moreover, $\alpha'_1\geq \frac{c_1}{f(1)}$ and $\alpha'_2\geq \frac{c_2}{f(2)}$.
    To show the existence of the desired ND contract, it suffices to show
    \begin{align}\label{2a:obj-1}
        \begin{split}
            &\max\left\{x(1-\alpha'_1), (1-x)(1-\alpha'_2), 1-\frac{\alpha'_2}{\beta}-\alpha'_2 \right\} \\
            \ge &\frac{\sqrt{\beta+1}}{\sqrt{\beta+1}+1}(1-\alpha'_1-\alpha'_2).
        \end{split}
    \end{align}

    Assume by contradiction that Inequality~(\ref{2a:obj-1}) does not hold. Substituting $1-\alpha'_1, 1-\alpha'_2$ by $u,v$ gives us: 
    \begin{subequations}
    \begin{align}
    x u &< \frac{\sqrt{\beta+1}}{\sqrt{\beta+1}+1}(u+v-1), \label{a}\\
    (1-x)v &< \frac{\sqrt{\beta+1}}{\sqrt{\beta+1}+1}(u+v-1), \label{b}\\
    1-\frac{\beta+1}{\beta}(1-v) &< \frac{\sqrt{\beta+1}}{\sqrt{\beta+1}+1}(u+v-1). \label{c}
    \end{align}
    \end{subequations}
    Multiplying both sides of Inequality~(\ref{a}) by $\frac{1}{u}$ and~(\ref{b}) by $\frac{1}{v}$, and then adding them, yields
    \begin{align}\label{contradict}
        \frac{(u+v-1)(u+v)}{uv}>1+\frac{1}{\sqrt{\beta+1}}. 
    \end{align}
    Denote by $H(u,v)$ the expression on the left hand side of the above inequality.
    We now check the monotonicity of function $H(u,v)$ with respect to $u, v$; recall that $\alpha'_1,\alpha'_2\in[0,1)$ and thus $u, v\in (0,1]$.
    By taking the partial derivatives,
    \[
    \frac{\partial H(u,v)}{\partial u}=\frac{u^2-v(v-1)}{u^2 v}>0 \text{ for all } u \in (0,1].
    \]
    Symmetrically, we have $\frac{\partial H(u,v)}{\partial v}>0$ for all $v\in (0,1]$.
    
    We finally show that the maximum value of $H(u,v)$ with $0\leq v\le u<1$ and Inequality~(\ref{c}) is less than $1+\frac{1}{\sqrt{\beta+1}}$, which leads to a contradiction to Inequality~(\ref{contradict}). This indicates that Inequalities~(\ref{a}), (\ref{b}), and (\ref{c}) can never hold simultaneously, which derives the desired contradiction.

    To establish the upper bound on $H(u,v)$, we first derive an upper bound on $v$ from Inequality~(\ref{c}).
    Rearrange it gives 
    \[
    (\beta+1-\frac{\beta\cdot\sqrt{\beta+1}}{\sqrt{\beta+1}+1})v < \frac{\beta\cdot\sqrt{\beta+1}}{\sqrt{\beta+1}+1} u-(\frac{\beta\cdot\sqrt{\beta+1}}{\sqrt{\beta+1}+1}-1).
    \]
    Since $u\leq 1$ and the coefficient of $v$ is non-negative, we have 
    \[
    v\leq \frac{1}{\beta+1-\frac{\beta\cdot\sqrt{\beta+1}}{\sqrt{\beta+1}+1}}=\frac{\sqrt{\beta+1}+1}{\sqrt{\beta+1}+\beta+1}=\frac{1}{\sqrt{\beta+1}}.
    \]
    Given the upper bounds of $u,v$ and the monotonicity of $H(u,v)$, we have
    \[
    H(u,v)\leq H(1,\frac{1}{\sqrt{\beta+1}}) = 1+\frac{1}{\sqrt{\beta+1}},
    \]
    contradicting to Inequality~(\ref{contradict}).
    Therefore, Inequality~(\ref{2a:obj-1}) is proved since (\ref{a}),(\ref{b}),(\ref{c}) cannot hold simultaneously.

    For the tightness let us consider the instance with $f(1)=\frac{1}{2}, c_1=\frac{1}{2}-\frac{1}{2\sqrt{\beta+1}}, f(2)=\frac{1}{2\sqrt{\beta+1}}, c_2=\frac{\epsilon}{2\sqrt{\beta+1}}$, where $\epsilon>0$ is arbitrarily small.
    The optimal unconstrained contract sets payment $1-\frac{1}{\sqrt{\beta+1}}$ to agent 1 and $\epsilon$ to agent 2, which yields the principal's utility $\frac{\sqrt{\beta+1}+1}{2(\beta+1)} - \epsilon(\frac{1}{2}+\frac{1}{2\sqrt{\beta+1}})$ for the principal.
    For $\beta$-ND contracts, (1) incentivizing only agent 1 with payment $1-\frac{1}{\sqrt{\beta+1}}$ yields utility $\frac{1}{2\sqrt{\beta+1}}$, 
    (2) incentivizing only agent 2 with payment $\epsilon$ yields utility $\frac{1}{2\sqrt{\beta+1}}(1-\epsilon)$, 
    and (3) incentivizing both agents, with payment $1-\frac{1}{\sqrt{\beta+1}}$ to agent 1 and $\frac{1}{\beta}(1-\frac{1}{\sqrt{\beta+1}})$ to agent 2, yields utility $\frac{1}{2\sqrt{\beta+1}}$.
    Therefore, the optimal utility under $\beta$-ND contracts is $\frac{1}{2\sqrt{\beta+1}}$, and thus,
    \[
    \frac{\OPT}{\OPT_{ND}}=1+\frac{1}{\sqrt{\beta+1}} \implies \beta\text{-}\PoND \leq 1+\frac{1}{\sqrt{\beta+1}},
    \]
    completing the proof.
\end{proof}

\end{document}